\newcommand{\printfnsymbol}[1]{%
  \textsuperscript{\@fnsymbol{#1}}%
}
\newtheorem*{remark}{Remark}
\newtheorem{assumption}{Assumption}
\renewcommand{\Pr}{\mathrm{Pr}}
\newcommand{\Zdagger}{\mathcal{Z^{\dagger}}}
\newcommand{\bdagger}{b^{\dagger}}
\newcommand{\pidagger}{\pi^{\dagger}}
\newcommand{\Br}{\mathrm{Br}}
\newcommand{\B}{\mathcal{B}}
\newcommand{\T}{T(\xi^{\bar{u}})}
\newcommand{\im}{\mathrm{Im}}
\newcommand{\ptest}{\gamma_i}
\newcommand{\pc}{\gamma_i}
\newcommand{\U}[2]{U_{#1}(#2)}
\newcommand{\Lu}{\max_{i' \in I, a \in A} u_{i'}(a)}
\newcommand{\La}{\max_{i' \in I} \{|A_{i'}| \cdot |A_{-i'}|\}}
\newcommand{\deltaT}{\delta_T}
\newcommand{\rhat}{\hat{r}}
\newtheorem{definition}{Definition}
\newtheorem{proposition}{Proposition}
\newtheorem{lemma}{Lemma}
\newtheorem{theorem}{Theorem}
\newtheorem{corollary}{Corollary}
\DeclareMathOperator*{\argmax}{argmax} 
\DeclareMathOperator{\E}{\mathbb{E}}
\title{Learning with Episodic Hypothesis Testing in General Games: A Framework for Equilibrium Selection}
\author{Ruifan Yang and Manxi Wu\thanks{Ruifan Yang is with the School of Operations Research and Information Engineering at Cornell University. Manxi Wu is with the department of Civil and Environmental Engineering at UC Berkeley. Contact: manxiwu@berkeley.edu}
}
\date{This version: July 30, 2025}
\begin{document}
\maketitle
\abstract{
We introduce a new hypothesis testing-based learning dynamics in which players update their strategies by combining hypothesis testing with utility-driven exploration. In this dynamics, each player forms beliefs about opponents’ strategies and episodically tests these beliefs using empirical observations. Beliefs are resampled either when the hypothesis test is rejected or through exploration, where the probability of exploration decreases with the player's (transformed) utility. In general finite normal-form games, we show that the learning process converges to a set of approximate Nash equilibria and, more importantly, to a refinement that selects equilibria maximizing the minimum (transformed) utility across all players. Our result establishes convergence to equilibrium in general finite games and reveals a novel mechanism for equilibrium selection induced by the structure of the learning dynamics. }

\section{Introduction}
When games admit multiple equilibria, a central and pressing question is: which equilibria are likely to emerge from the learning process? This challenge has led to research on adaptive dynamics as a mechanism for equilibrium refinement. In this paper, we propose a hypothesis testing-based learning
dynamics for general finite normal-form games that enable equilibrium refinement based on players’ exploration behavior.

Our work builds on an extensive body of research investigating dynamic models in which players revise their strategies in response to observed actions or realized payoffs. In structured games such as zero-sum, dominance-solvable, or potential games, belief-based models like fictitious play are known to converge to Nash equilibria under best-response dynamics \citep{Fudenberg1998}. In more general settings, Bayesian learning and hypothesis-testing based learning dynamics converge to Nash equilibria under the grain-of-truth assumption \citep{Kalai1993, Foster2003, Nachbar1997, Nachbar2001, Nachbar2005}, while calibrated forecasting ensures convergence to correlated equilibria when players best respond to predictions \citep{Foster1997, Kakade2004, Mannor2007}.  In repeated two-player games, \citet{Jindani2022} extends hypothesis testing based learning to achieve Pareto-efficient equilibria via exploration. Other payoff-based dynamics, such as regret minimization leads to coarse correlated equilibria \citep{Hart2000, Germano2007, Foster2006}. Stochastic learning rules, including log-linear learning and trial-and-error learning, introduce random perturbations into the strategy update process and are analyzed using Markov chain methods to characterize their stochastically stable states — the set of states visited with high probability as the randomness vanishes \citep{Young2009, Marden2009a, Pradelski2012, MardenShamma2012, Borowski2019}. When pure equilibria exist, these dynamics show that the learning process not only can reach equilibria but also select the welfare-maximizing ones.

Existing equilibrium refinement results focus on learning dynamics that lead to welfare-maximizing or Pareto-efficient equilibria, and often rely on strong assumptions, such as potential game structure \citep{MardenShamma2012}, the existence of pure Nash equilibria \cite{Pradelski2012}, or restriction to two-player games \citep{Jindani2022}. There is limited understanding on how the design of a learning algorithm influences which equilibrium is selected, and how players might steer the learning outcome by tuning parameters in their own dynamics. Our work addresses this gap by analyzing a hypothesis-testing based learning dynamic in which each player's exploration behavior — modulated by their sensitivity to utility —  governs the long-run equilibrium outcome of the learning dynamics. 


In our learning dynamic, each player maintains a belief about others’ strategies, plays a smooth best response, and periodically tests whether the belief aligns with observed actions. 
A key feature of this dynamics is the integration of episodic hypothesis testing with utility-sensitive exploration. Hypothesis testing serves to detect whether a player’s belief is statistically inconsistent with observed empirical play. If the discrepancy exceeds a tolerance threshold with high confidence, the belief is rejected and resampled. Exploration provides a complementary mechanism for belief revision, which allows players to experiment when dissatisfied with their utility, even if their belief passes the statistical test. The probability of such exploration decreases with expected utility, mapped through a transformation function that encodes the player's sensitivity to dissatisfaction. This transformed utility plays a central role in determining how likely a player is to explore at a given utility level and, in turn, influences long-run equilibrium selection.

We analyze the long-run behavior of this learning dynamics by characterizing its stochastically stable states. Formally, the joint evolution of players' beliefs and strategies induces a Markov chain over a finite state space, where each state represents a tuple of player beliefs and their corresponding smooth best response strategies. A state is stochastically stable if it retains positive stationary probability distribution as the exploration rate becomes very small. In other words, when the exploration rate is positive but close to zero, the system spends most of its time in the stochastically stable states.

We show that the stochastically stable set lies within the set of approximate Nash equilibria. Specifically, it selects those equilibria that maximize the minimum transformed utility across all players (Theorem \ref{theorem:main}). There are two key contributions in this result. First, this result guarantees that the stochastically stable set is composed of approximate equilibria in general finite games, which implies that this learning dynamics will eventually play those approximate equilibria for the vast majority of time. Second, this result identifies a principled refinement criterion among equilibria: those with higher minimum transformed utility are more stable, as players with lower transformed utility are more prone to explore and destabilize the equilibrium.

The intuition behind this result follows directly from the design of the learning dynamics. States in which players hold beliefs that are inconsistent with observed actions are unlikely to persist, due to the high probability of hypothesis rejection. We show in Proposition \ref{lemma: sigma, tau, M} that all consistent states correspond to $\epsilon$-Nash equilibria under appropriate conditions on the best response function smoothness, hypothesis testing tolerance, and belief discretization parameters. Hence, all long-run stochastically stable outcomes must be approximate equilibria. At each consistent state (i.e., approximate Nash equilibria), the player with the lowest transformed utility (i.e. the least satisfied) has the highest probability to initiate exploration and deviate. As a result, equilibria that maximize the minimum transformed utility are least likely to be destabilized and thus become the most stable in the long run. We prove this result using stochastic stability analysis and resistance-tree methods \cite{Young1993}, and show that such equilibria correspond to recurrent classes of the Markov chain with the minimal stochastic potential.

Furthermore, our analysis reveals how the structure of the utility transformation functions governs equilibrium refinement. When all players use identical transformation functions, the learning dynamics select equilibria that maximize the minimum utility across agents, leading to a max-min refinement. In contrast, when players can choose distinct transformation functions, the dynamics may favor a particular player whose transformation consistently maps their utility to values lower than those of others. Because such a player explores more frequently at equilibrium, the stochastically stable set selects equilibria that maximize this player’s utility. Corollary \ref{cor:twocases} formally characterizes this effect and shows how the choice of transformation functions directly governs which players are favored by the equilibrium refinement process.

\section{Model and Preliminaries}\label{sec: model}
\subsection{The Static Game Model}
We consider a static game \( G = (I, A = (A_i)_{i \in I}, (u_i)_{i \in I}) \), where:
\begin{itemize}
    \item \( I \) is a finite set of players, with \( |I| = n \),
    \item \( A_i \) is a finite set of actions for player \( i \), and \( A := \prod_{i \in I} A_i \) denotes the set of joint action profiles. Let \( A_{-i} := \prod_{j \ne i} A_j \) denote the set of joint actions of players other than \( i \),
    \item \( u_i : A \to \mathbb{R} \) is the payoff function for player \( i \).
\end{itemize}

For each player \( i \), a mixed strategy \( \pi_i = (\pi_i(a_i))_{a_i \in A_i} \in \Delta_i \) is a probability distribution over \( A_i \), where \( \Delta_i \) denotes the simplex on \( A_i \). A joint strategy profile is \( \pi = (\pi_i)_{i \in I} \in \Delta := \prod_{i \in I} \Delta_i \), where $\Delta$ is the space of joint mixed strategies. The expected utility for player \( i \) given \( \pi \) is:
\[
U_i(\pi) = \mathbb{E}_{a \sim \pi}[u_i(a)].
\]
Each player \( i \in I \) holds a \emph{belief} \( b_i = (b_{ij})_{j \in I \setminus \{i\}} \) of their opponents' strategies, where each \( b_{ij} \in \Delta_j^M \) represents player \( i \)'s belief about player \( j \)'s strategy. We assume that beliefs are discretized in that each \( b_{ij} \) is in a \emph{discretized probability simplex} over \( A_j \) with granularity parameter \( M \in \mathbb{N}_+ \) defined as follows: 
\[
b_{ij}  \in \Delta_j^M := \left\{ b_{ij} \in \mathbb{R}_{\geq 0}^{|A_j|} \;\middle|\; \sum_{a_j \in A_j} b_{ij}(a_j) = 1,\; b_{ij}(a_j) \in \left\{ \frac{m}{M} : m = 0, \ldots, M \right\} \right\},~ \forall j \in I \setminus \{i\}, ~ \forall i \in I.  
\]
Accordingly, player \( i \)'s \emph{discretized belief space} is \(\mathcal{B}_i := \prod_{j \ne i} \Delta_j^M
\). We denote the joint belief vector as $b = (b_i)_{i \in I}$ and the \emph{joint discretized belief space} as
\(
\mathcal{B} := \prod_{i \in I} \mathcal{B}_i.
\)


We next define the smooth best response function and \(\epsilon\)-Nash equilibrium.

\begin{definition}[Smooth Best Response]
For any player \( i \in I \) and temperature parameter \( \sigma > 0 \), the smooth best response function \( \Br_i^\sigma : \mathcal{B}_i \to \Delta_i \) is given by:
\begin{align}\label{eq:smoothed_br}
    \Br_i^\sigma(a_i \mid b_i) = \frac{e^{\frac{1}{\sigma}U_i(a_i, b_{i})}}{\sum_{a_i' \in A_i} e^{\frac{1}{\sigma}U_i(a_i', b_{i})}}, \quad \forall a_i \in A_i, \quad \forall b_i \in \mathcal{B}_i,
\end{align} 
where $U_i(a_i', b_{i}) = \sum_{a_{-i}\in A_{-i}} u_i(a_i, a_{-i})b_{i}(a_{-i})$. The image set of the smooth best response function is $\im\left(\Br^{\sigma}_{i}\right):= \left\{\pi_{i} \left\vert \pi_i = \Br^\sigma_{i}(b_i), ~ b_i \in \B_i\right.\right\}$. 
\end{definition}
We denote the joint smooth best response function as 
$\Br^\sigma(b) = (\Br_i^\sigma(b_i))_{i \in I}$, and the joint smooth best response function for all players other than $i$ as 
$\Br^\sigma_{-i}(b_{-i}) = (\Br_j^\sigma(b_j))_{j \in I\setminus\{i\}}$.

\begin{definition}[\(\epsilon\)-Nash Equilibrium]
For any \( \epsilon > 0 \), a strategy profile \( \pi^* \in \Delta \) is an \(\epsilon\)-Nash equilibrium if:
\[
U_i(\pi_i^*, \pi_{-i}^*) \ge U_i(\pi_i', \pi_{-i}^*) - \epsilon, \quad \forall \pi_i' \in \Delta_i, \quad \forall i \in I.
\]
As \( \epsilon \to 0 \), \( \pi^* \) becomes a Nash equilibrium.
\end{definition}

\subsection{Preliminaries for Hypothesis Testing} \label{subsec:testing}

A hypothesis test evaluates whether player \( i \)'s belief \( b_i \) is close to the true strategy profile $\pi_{-i}$ of the opponents. Given a tolerance level $\tau > 0$, we define:

\begin{itemize}
    \item[-] The \emph{null hypothesis} \( H_0 \): The strategy profile $\pi_{-i}$ is within $\tau$-distance to player \( i \)'s belief $b_i$:
    \[ \|\pi_{-i} - b_i\|_2 \leq \tau .\]
    \item[-] The \emph{alternative hypothesis} \( H_1 \): The strategy profile $\pi_{-i}$ is not within $\tau$-distance to player \( i \)'s belief $b_i$:
    \[ \|\pi_{-i} - b_i\|_2 > \tau .\]
\end{itemize}

For each player $i$, let $\{a_{-i}^t\}_{t=1}^{T}$ be $T$ independent and identically distributed (i.i.d.) action profiles sampled from $\pi_{-i}$.
    For a given significance level $\alpha \in (0,1)$, each player $i$ conducts a hypothesis test that rejects $H_0$ if and only if 
    \begin{align*}
        \|\hat{\pi}_{-i} - b_i \|_2 > \tau  + \sqrt{\frac{|A_{-i}|}{2T}\cdot\ln\bigl(\frac{2}{\bar\alpha}\bigr)}, 
    \end{align*}
    where $\hat{\pi}_{-i}$ is the empirical  estimate of the opponents' strategy profile:
\[
  \hat{\pi}_{-i}(a_{-i}) =
  \frac{1}{T}\sum_{t=1}^{T}\mathbf1\{a_{-i}^{t} = a_{-i}\}, 
  \qquad \forall a_{-i}\in A_{-i}.
\]
That is, the reject region for player $i$'s hypothesis test is given by
\begin{align}\label{eq:rejection_region}
R_i(\alpha) = \left\{(a^t_{-i})_{t = 1}^T| \left\vert \|\hat{\pi}_{-i} - b_i\| > \tau + \sqrt{\frac{|A_{-i}|}{2T} \cdot \ln \left(\frac{2}{\alpha}\right)}\right.\right\}.    
\end{align}
The type I and type II errors of the hypothesis test are defined as follows: 
\begin{itemize}
    \item[-] \emph{Type I error (false positive)}: rejecting \( H_0 \) even though $\|\pi_{-i}-b_i\|_2\leq\tau$.
    \item[-] \emph{Type II error (false negative)}: failing to reject \( H_0 \) even though $\|\pi_{-i}-b_i\|_2>\tau$.
\end{itemize}
\begin{proposition} \label{prop: hypothesis test}
For any $\alpha \in (0,1)$, if 
\begin{align}\label{eq:T_bound}
    T \geq  T(\alpha) : = \max\left\{\left.\frac{2|A_{-i}|}{(\|\pi_{-i} - b_i\|_2-\tau)^2}\,\ln\left(\frac{2}{\alpha}\right) \right\vert i \in I, b_i \in \mathcal{B}_i, \pi_{-i}\in \im(\Br^{\sigma}_{-i}),\|\pi_{-i}-b_i\|_2 > \tau\right\},
\end{align} 
then the maximum type I and type II errors of the hypothesis test are upper bounded by $\alpha$ for all players, all beliefs and the associated smooth best response strategies. That is, for any player $i$, any belief $b_i \in \B_i$ and smooth best response strategy profile $\pi_{-i} = \Br_i^{\sigma}(b_{-i})$, 
\begin{align*}
& \Pr(\text{reject } H_0| ~   \|b_i - \pi_{-i}\|_2 \leq \tau) \leq \alpha, \quad  \Pr(\text{fail to reject } H_0| ~\|b_i - \pi_{-i}\|_2 > \tau) \leq \alpha. 
\end{align*}
\end{proposition}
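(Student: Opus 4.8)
The plan is to reduce both error bounds to a single concentration inequality for the empirical estimate $\hat{\pi}_{-i}$ around the true profile $\pi_{-i}$, and then establish that inequality by applying Hoeffding's bound coordinatewise. Throughout, write $c(\alpha) := \sqrt{\frac{|A_{-i}|}{2T}\ln(2/\alpha)}$ for the slack term added to $\tau$ in the rejection threshold of \eqref{eq:rejection_region}. For the type I error I would use the triangle inequality: on the event $H_0$, i.e. $\|\pi_{-i}-b_i\|_2 \le \tau$, rejection requires $\|\hat{\pi}_{-i}-b_i\|_2 > \tau + c(\alpha)$, and since $\|\hat{\pi}_{-i}-b_i\|_2 \le \|\hat{\pi}_{-i}-\pi_{-i}\|_2 + \|\pi_{-i}-b_i\|_2 \le \|\hat{\pi}_{-i}-\pi_{-i}\|_2 + \tau$, rejection forces $\|\hat{\pi}_{-i}-\pi_{-i}\|_2 > c(\alpha)$. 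Hence $\Pr(\text{reject } H_0 \mid H_0) \le \Pr(\|\hat{\pi}_{-i}-\pi_{-i}\|_2 > c(\alpha))$.

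For the type II error I would use the reverse triangle inequality together with the lower bound on $T$. On $H_1$, write $D := \|\pi_{-i}-b_i\|_2 > \tau$. Failing to reject means $\|\hat{\pi}_{-i}-b_i\|_2 \le \tau + c(\alpha)$, and since $\|\hat{\pi}_{-i}-b_i\|_2 \ge D - \|\hat{\pi}_{-i}-\pi_{-i}\|_2$, this forces $\|\hat{\pi}_{-i}-\pi_{-i}\|_2 \ge D - \tau - c(\alpha)$. Applying the hypothesis $T \ge T(\alpha)$ to the configuration $(i, b_i, \pi_{-i})$, which is one of the terms in the maximum defining $T(\alpha)$, gives $c(\alpha)^2 = \frac{|A_{-i}|}{2T}\ln(2/\alpha) \le \frac{(D-\tau)^2}{4}$, so $c(\alpha) \le \frac{D-\tau}{2}$ and therefore $D - \tau - c(\alpha) \ge \frac{D-\tau}{2} \ge c(\alpha)$. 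Consequently $\Pr(\text{fail to reject } H_0 \mid H_1) \le \Pr(\|\hat{\pi}_{-i}-\pi_{-i}\|_2 \ge c(\alpha))$ as well, so both errors are controlled by the same tail event $\{\|\hat{\pi}_{-i}-\pi_{-i}\|_2 \ge c(\alpha)\}$.

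It remains to show $\Pr(\|\hat{\pi}_{-i}-\pi_{-i}\|_2 \ge c(\alpha)) \le \alpha$, which is the crux. Each coordinate $\hat{\pi}_{-i}(a_{-i}) = \frac{1}{T}\sum_{t=1}^{T}\mathbf{1}\{a_{-i}^t = a_{-i}\}$ is an average of $T$ i.i.d. Bernoulli$(\pi_{-i}(a_{-i}))$ variables, so Hoeffding's inequality yields $\Pr(|\hat{\pi}_{-i}(a_{-i})-\pi_{-i}(a_{-i})| > r) \le 2e^{-2Tr^2}$ for each $a_{-i}$. Combining the norm comparison $\|\hat{\pi}_{-i}-\pi_{-i}\|_2 \le \sqrt{|A_{-i}|}\,\|\hat{\pi}_{-i}-\pi_{-i}\|_\infty$ with a union bound over the $|A_{-i}|$ coordinates, taking $r = c(\alpha)/\sqrt{|A_{-i}|}$, produces the desired tail estimate. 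I expect this concentration step to be the main obstacle, and the delicate point is the dimension dependence: the $\ell_2$-to-$\ell_\infty$ passage plus the union bound introduces a prefactor $|A_{-i}|$ in front of the exponential, so a derivation matching the advertised threshold really calls for the logarithmic term $\ln(2|A_{-i}|/\alpha)$ (equivalently, per-coordinate significance $\alpha/|A_{-i}|$), or a sharper dimension-aware $\ell_2$ concentration (e.g. a bounded-differences/McDiarmid argument for $\|\hat{\pi}_{-i}-\pi_{-i}\|_2$ centered at its mean) that avoids the union-bound overhead. The two triangle-inequality reductions are routine; essentially all the care lies in selecting the concentration tool so that the slack $c(\alpha)$ emerges with exactly the stated constant.
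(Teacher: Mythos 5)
Your proposal follows essentially the same route as the paper's proof in Appendix~\ref{appendix: proof of hypothesis test}: the identical triangle-inequality reduction of the Type~I error to the tail event $\{\|\hat{\pi}_{-i}-\pi_{-i}\|_2 > \delta_T\}$, the identical reverse-triangle reduction of the Type~II error combined with $T \ge T(\alpha) \Rightarrow \delta_T \le (d-\tau)/2$, and the identical concentration strategy of coordinatewise Hoeffding plus the $\ell_2$-to-$\ell_\infty$ comparison $\|\hat{\pi}_{-i}-\pi_{-i}\|_2 \le \sqrt{|A_{-i}|}\,\max_{a_{-i}}|\hat{\pi}_{-i}(a_{-i})-\pi_{-i}(a_{-i})|$. (A cosmetic difference: you funnel both errors through the single tail event at radius $c(\alpha)$ via $d-\tau-c(\alpha) \ge c(\alpha)$, whereas the paper evaluates the Type~II tail at radius $d-\tau-\delta_T$ directly and then substitutes $\delta_T \le (d-\tau)/2$ into the exponent; these are interchangeable.)

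The ``delicate point'' you flag is not merely hypothetical --- it is exactly where the paper's own proof is loose. In the chain leading to \eqref{eq: empirical tail bound}, the paper passes from $\Pr\bigl(\max_{a_{-i}}|\hat{\pi}_{-i}(a_{-i})-\pi_{-i}(a_{-i})| \ge \delta/\sqrt{|A_{-i}|}\bigr)$ to $2\exp\bigl(-2T\delta^2/|A_{-i}|\bigr)$ using the single-coordinate Hoeffding bound, silently omitting the union bound over the $|A_{-i}|$ coordinates; the correlation of multinomial coordinates does not rescue this step, so the correct bound carries the prefactor $2|A_{-i}|$, and the advertised slack should read $\sqrt{\tfrac{|A_{-i}|}{2T}\ln\bigl(\tfrac{2|A_{-i}|}{\alpha}\bigr)}$ (with the matching change in $T(\alpha)$), precisely as you observe. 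Your proposed fixes are both viable --- per-coordinate significance $\alpha/|A_{-i}|$, or a McDiarmid-type bound on $\|\hat{\pi}_{-i}-\pi_{-i}\|_2$ centered at its mean (which is at most $1/\sqrt{T}$) to avoid the union-bound overhead. The repair is harmless downstream: the learning dynamics only require \emph{some} finite $T(\xi^{\bar u})$ making both errors at most $\xi^{\bar u}$, and enlarging the logarithmic term by $\ln|A_{-i}|$ preserves this, so Theorem~\ref{theorem:main} is unaffected. In short, your proposal is the paper's argument done correctly.
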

Proposition~\ref{prop: hypothesis test} ensures that, if the number of samples $T$ exceeds a certain threshold, then both the Type I and Type II error rates of the hypothesis test are uniformly bounded by the significance level $\alpha$, across all players, all discretized beliefs, and all smooth best response strategy profiles. The proof of this result is provided in Appendix ~\ref{appendix: proof of hypothesis test}.

\begin{remark}
The hypothesis test introduced above is a simple test based on an $\ell_2$-distance threshold and a Chernoff-type bound on the empirical estimation error. It is not necessarily the most powerful or optimal test for detecting discrepancies between beliefs and strategies. Since the focus of this paper is not on optimal test design but on the learning dynamics and convergence behavior in general games, we adopt this test for simplicity. Our convergence results remain valid under any hypothesis testing procedure that ensures small Type I and Type II error given a finite number of samples.
\end{remark}

\section{Learning Dynamics and Stochastically Stable Set}\label{sec:algo}

In this section, we present the learning dynamics (Algorithm~\ref{alg:learning with hypothesis testing}), where players periodically test their beliefs about other players' strategies and resample new beliefs of opponents based on hypothesis test result and their utility. We also present our main theorem that demonstrates the long-run outcomes of the learning dynamics. 
\subsection{Learning Dynamics}
We begin by outlining the overall structure of the learning dynamics, before detailing the key quantities that govern the algorithm’s dynamics.
In Algorithm~\ref{alg:learning with hypothesis testing}, each player $i$ begins with an initial belief $b_i^0$ about others’ strategies and selects a smooth best response strategy $\pi_i^0$ corresponding to their belief, as in \eqref{eq:smoothed_br}. The learning proceeds in \emph{epochs} $k=1,2,\ldots$. In each epoch $k$, each player $i$ maintains belief $b_i^k$ and plays the game according to a smooth best response strategy $\pi_i^k= \Br_i^{\sigma}(\pi_i^k)$ for $T(\xi^{\bar{u}})$ rounds, collecting observations of joint action profiles. Here, $\xi \in (0,1)$ is a hyperparameter of the learning dynamics and $\bar{u}$ is a sufficiently large constant, which will be described in details later. 
Each player $i$ independently decides whether to conduct a hypothesis test against their current belief $b_i^k$, with probability $\gamma_i > 0$. The hypothesis test is as defined in Section \ref{subsec:testing}, with $\xi^{\bar{u}}$ being the significance level and $T(\xi^{\bar{u}})$ being the sample size. If the null hypothesis is rejected (indicating a significant discrepancy between the observed actions and belief $b_i^k$ with high probability), player $i$ samples a new belief $b_i^{k+1}$ and updates their strategy to the corresponding best response $\pi_i^{k+1} = \mathrm{Br}_i^\sigma(b_i^{k+1})$. If the null hypothesis is not rejected, player $i$ may still sample a new belief (referred to as exploration) with probability $\xi^{f_i(U_i(\pi^k_i,b^k_i))}$, where $f_i(\cdot)$ is an increasing function and $U_i(\pi^k_i,b^k_i)$ is player $i$'s anticipated utility given their belief and best response. If neither testing nor exploration occurs, the player does not update their belief and strategy. Fig. \ref{fig:learning} illustrates the belief update process for each player in each decision epoch based on randomized testing, hypothesis rejection, and exploration decisions. 
\begin{figure}[ht]
    \centering
    \includegraphics[width=0.45\textwidth]{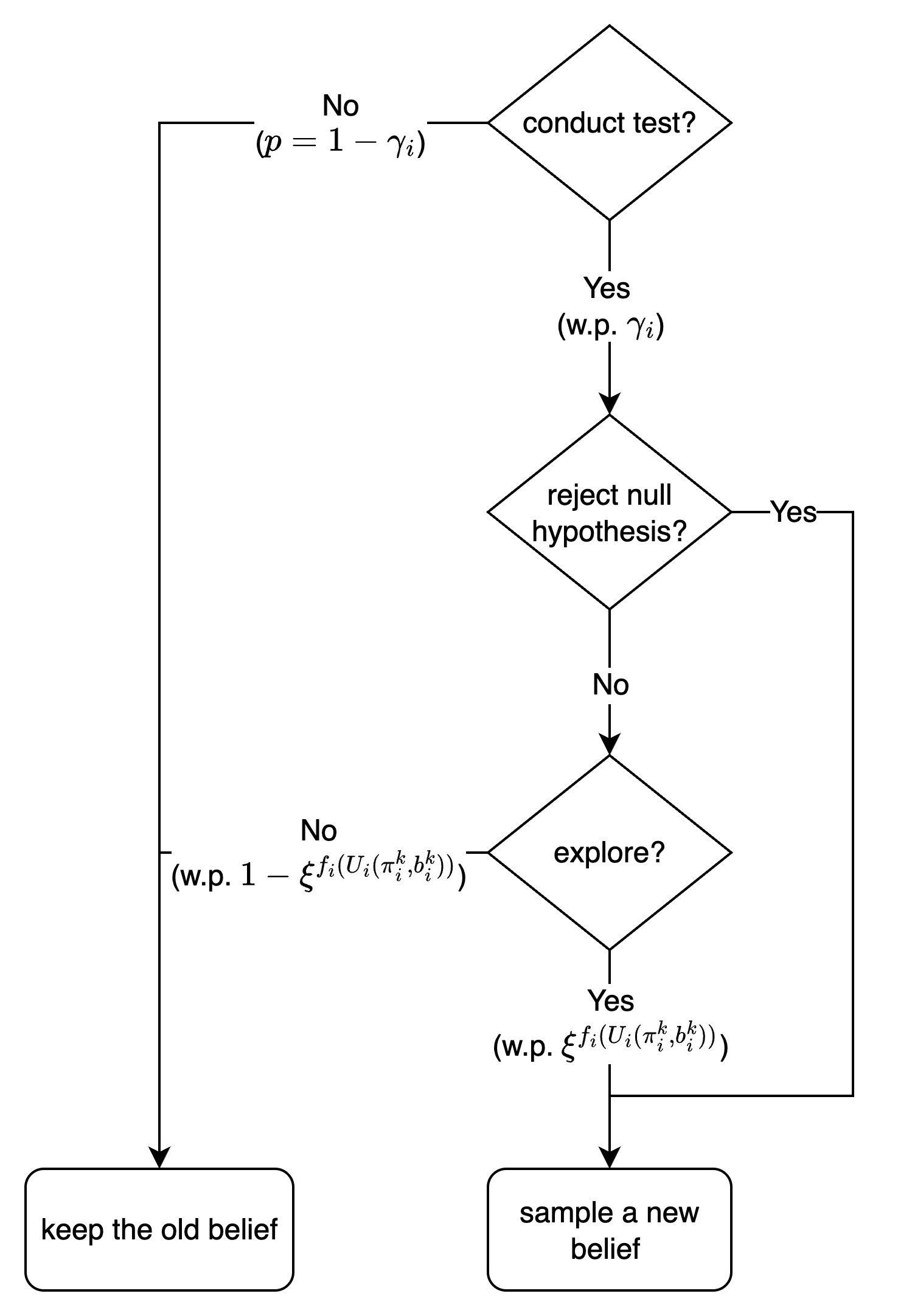}  
    \caption{Belief update flowchart in each epoch $k$.}
    \label{fig:learning}
\end{figure}

We now provide more details on the key quantities of the learning algorithm: 
\begin{itemize}
    \item $\gamma_i>0$ is the probability that each player conducts a test in each epoch.
    \item $\xi \in (0,1)$ is a hyperparameter that affects hypothesis test significance level, epoch length, and players' probability of exploration.
    \item $f_i: \mathbb{R} \to \mathbb{R}_{>0}$ is an increasing function that maps player $i$'s utility to a positive value. When player $i$'s hypothesis test fails to reject in epoch $k$, the player explores (resamples a new belief) with probability $\xi^{f_i(U_i(\pi_i^k, b_i^k))}$, where $U_i(\pi_i^k, b_i^k)$ is the expected utility of player $i$ given their belief and smoothed best response. Since $\xi \in (0,1)$ and $f_i(\cdot)$ is increasing, the exploration probability is higher when the utility $U_i(\pi_i^k, b_i^k)$ is low. The function $f_i(\cdot)$ can be viewed as player $i$'s sensitivity function, governing how their exploration probability changes with their utility, and may differ for different players. We refer to $f_i(U_i(\pi_i^k, b_i^k))$ as player $i$'s transformed utility in epoch $k$. 
     \item $\bar{u} > \max_{\pi \in \Delta} \sum_{i \in I} f_i(U_i(\pi))$ is a constant larger than the maximum total transformed utility of all players in the game. The constant $\bar{u}$ being sufficiently large ensures that the test significance level $\xi^{\bar{u}}$ in each epoch is sufficiently small.   
    \item $T(\xi^{\bar{u}})$ is the length of each epoch, which is also the sample size of each hypothesis test. Following from Proposition \ref{prop: hypothesis test}, $T(\xi^{\bar{u}})$ satisfies \eqref{eq:T_bound}, and ensures that the hypothesis test conducted by each player has Type I and Type II errors upper bounded by $\xi^{\bar{u}}$.
    \item $\psi_i = \left( \psi_i(b'_i \mid b_i) \right)_{b_i, b'_i \in \mathcal{B}_i}$ is player $i$'s belief resampling probability, where $\psi_i(b'_i \mid b_i)$ is the probability that player $i$ resamples belief $b'_i$ for the next epoch, given their current belief $b_i$. This resampling rule is quite general: it is player-specific and allows the new belief distribution to depend on the current belief. We impose the following assumption to ensure sufficient exploration: the resampling probability has full support over the belief space, i.e.,
$$
\psi_i(b'_i \mid b_i) \geq \lambda \quad \forall b_i, b'_i \in \mathcal{B}_i,
$$
where $\lambda>0$. 

\end{itemize}

\begin{algorithm}[ht]
\SetKwInOut{Input}{Input}
\SetKwInOut{Init}{Initalization}
\SetKwIF{ProbIf}{ProbElse}{probif}{with probability}{:}{}{else}{}
\Input{$\xi \in (0,1)$, $\ptest (0,1)$ for each $i \in I$\;
$f_i: \mathbb{R} \to \mathbb{R}_+$ is an increasing utility transformation function for each player $i \in I$\;
$\T$ satisfies \eqref{eq:T_bound} and $\bar{u} > \max_{\pi \in \Delta} \sum_{i \in I} f_i(\U{i}{\pi}) $\;
$\psi_i = \left( \psi_i(b'_i \mid b_i) \right)_{b_i, b'_i \in \mathcal{B}_i}$ is the belief resampling probability for each player $i \in I$, $\psi_i(b'_i \mid b_i)>\lambda>0$ for all $b_i, b'_i \in \mathcal{B}_i$ and all $i \in I$\;
}
\Init{Each player $i$ holds a belief $b^0_i \in \mathcal{B}_i$ and chooses a smooth best response strategy $\pi^0_i = \Br^{\sigma}_i(b_i^0)$.}
\For{epoch $k = 1,2,\cdots$}{
    Each player $i \in i$ plays the game with strategy $\pi^k_i$ for $\T$ periods: $(a^t_i) \sim \pi^k_i$ for ${t = 1,\cdots, \T}$, and observe full action profiles $(a^t)_{t = 1,\cdots , \T}$\;
    \For{each player $i = 1,\cdots, |I|$}{
        \eProbIf{$\ptest$}
        {Conduct a hypothesis test for $b^k_i$ using observations $(a^t_{-i})_{t = 1,\cdots , \T}$\;
        \eIf{ $(a^t_{-i})_{t = 1}^{\T} \in R_i(\xi^{\bar{u}})$ as in \eqref{eq:rejection_region}}{
            Sample $b^{k+1}_i\in \mathcal{B}_i$ according to $\psi_i(\cdot|b_i^k)$ and 
             $\pi^{k+1}_i = \Br^{\sigma}_i(b^{k+1}_i)$\;
        }{
            \eProbIf{$\xi^{f_i(\U{i}{\pi^k_i,b^k_i})}$}{
            Sample $b^{k+1}_i\in \mathcal{B}_i$ according to $\psi_i(\cdot|b_i^k)$ and 
            $\pi^{k+1}_i = \Br^{\sigma}_i(b^{k+1}_i)$\;}
            {$b^{k+1}_i = b^k_i$; $\pi^{k+1}_i = \pi^k_i$\;}
        }
        }{$b^{k+1}_i = b^k_i$; $\pi^{k+1}_i = \pi^k_i$\;}
    }
}
\caption{Learning with Episodic Hypothesis Testing}
\label{alg:learning with hypothesis testing}
\end{algorithm}


A key feature of the learning dynamics is the integration of hypothesis testing with utility-sensitive exploration. The hypothesis test evaluates whether player $i$'s belief $b_i^k$ is within $\tau$ distance to the actual strategy profile $\pi^k_{-i}$. Specifically, when the test rejects the null hypothesis with significance level $\xi^{\bar{u}}$, it indicates with probability at least $1-\xi^{\bar{u}}$, the discrepancy $\|\pi^k_{-i} - b_i^k\|_2 > \tau$; in this case, player will resample a new belief. Conversely, when the null hypothesis fails to be rejected, it indicates that $b_i^k$ is $\tau$-consistent with $\pi^k_{-i}$ with probability at least $1-\xi^{\bar{u}}$. In this scenario, belief resampling occurs when the player explores with probability $\xi^{f_i(U_i(\pi_i^k, b_i^k))}$, where $f_i(\cdot)$ encodes the player's sensitivity to utility dissatisfaction. The integration of statistical hypothesis testing and utility-sensitive exploration ensures that the learning process corrects inconsistent beliefs with high probability while allowing adaptive exploration when current utility is unsatisfactory.

\subsection{Stochastically Stable Set Characterization}
In the learning dynamics, we refer to the tuple \((b, \pi)\) as the \emph{state} of the system. We note that for any belief \( b_i \in \mathcal{B}_i \), each player \( i \in I \) always selects their strategy \( \pi_i \in \Delta_i \) via the smooth best response function \( \pi_i = \Br_i^\sigma(b_i)\) with temperature parameter \( \sigma \). Therefore, given any $\xi > 0$, the learning dynamics induces a Markov chain on the finite state space: 
\[\mathcal{Z}:= \{(b, \pi) \mid b_i \in \mathcal{B}_i,\, \pi_i = \Br^{\sigma}_i(b_i), ~i \in I\}.\] 

\begin{definition}[Consistent state]\label{def:consistent_state}
A state \( z = (b, \pi) \in \mathcal{Z} \) is \textit{consistent} if, for all players \( i \in I \), their belief \( b_i \) is within \(\tau\)-distance of the strategy profile \(\pi_{-i}\), and \(\pi\) is a smooth best response to \(b\). That is, the set of all consistent states is given by 
\[\Zdagger := \{z = (b, \pi) \in \mathcal{Z} \mid \|b_i - \pi_{-i}\|_2 \leq \tau, \quad \forall i \in I\}.\]
\end{definition}

\begin{assumption} \label{assump: sigma, tau, M}
Given any $\epsilon>0$, we assume that parameters $\sigma, \tau, M$ satisfy: 
\begin{subequations}
\begin{align}
    \sigma &\leq \frac{\epsilon}{2\cdot\log(\max_{i \in I}|A_i|)}, \quad  \tau \leq \frac{\epsilon \cdot \sigma}{2 \sqrt{|A|} \cdot (\max_{i \in I, a \in A} u_i(a) \cdot (\max_{i \in I} |A_i| \cdot |A_{-i}|)} \label{eq: sigma and tau},\\
       M &\geq \frac{|I| \cdot \max_{i \in I} |A_{i}| }{\tau} \cdot \left( 1 + \frac{\sqrt{\max_{i \in I} |A_{i}|}}{\sigma} \cdot \max_{i \in I, a \in A} u_i(a) \cdot \max_{i \in I} |A_i||A_{-i}| \cdot |I| \right).  \label{eq: granularity M}
\end{align} 
\end{subequations}
\end{assumption}

Assumption~\ref{assump: sigma, tau, M} ensures that the smooth best response temperature parameter $\sigma$ and the hypothesis testing tolerance level $\tau$ are sufficiently small, while the belief granularity parameter $M$ is sufficiently large. A small $\sigma$ guarantees that the smooth best response closely approximates the exact best response to any belief. A small $\tau$ ensures that, under consistency, each player's belief is sufficiently close to the actual strategy of their opponents. A large $M$ ensures that for any strategy profile, there exists a discretized belief vector that satisfies the consistency condition. The following proposition shows that under Assumption \ref{assump: sigma, tau, M}, each consistent state $z \in \Zdagger$ is an $\epsilon$-Nash equilibrium. The proof of this result is provided in Appendix ~\ref{appendix: large M}.
\begin{proposition}\label{lemma: sigma, tau, M}
    For any $\epsilon > 0$, under Assumption \ref{assump: sigma, tau, M},
    the consistent state set $\Zdagger$ is non-empty. Furthermore, for any $z^{\dagger} = (b^{\dagger}, \pi^{\dagger}) \in \Zdagger$, $\pi^{\dagger}$ is an $\epsilon$-Nash equilibrium.
\end{proposition}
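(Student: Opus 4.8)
The statement has two parts, and I would treat them separately. \emph{Non-emptiness} of $\Zdagger$ is fundamentally an equilibrium-existence claim adapted to the discretized belief grid, whereas the \emph{$\epsilon$-Nash property} is a perturbation argument showing that a smooth best response to a $\tau$-consistent belief is an approximate best response to the true opponent play. I would prove existence first and then the regret bound.

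For non-emptiness, the plan is to produce a continuous logit (quantal-response) fixed point and round it onto the grid $\B$. Define $G:\Delta\to\Delta$ by $G(\pi)_i=\Br^\sigma_i(\pi_{-i})$, reading the strategy profile $\pi_{-i}$ as an independent belief for player $i$. Each $\Br^\sigma_i$ is continuous and $\Delta$ is compact and convex, so Brouwer yields a fixed point $\pi^*$ with $\pi^*_i=\Br^\sigma_i(\pi^*_{-i})$ for all $i$. I would then set $\bdagger_{ij}$ to be a nearest grid point of $\pi^*_j$ in $\Delta^M_j$ and let $\pidagger_i=\Br^\sigma_i(\bdagger_i)$, so that $(\bdagger,\pidagger)\in\mathcal{Z}$ by construction and only the consistency inequality $\|\bdagger_i-\pidagger_{-i}\|_2\le\tau$ remains. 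Writing $\|\bdagger_i-\pidagger_{-i}\|_2\le\|\bdagger_i-\pi^*_{-i}\|_2+\|\pi^*_{-i}-\pidagger_{-i}\|_2$, the first term is a pure rounding error controlled by the grid spacing $1/M$ (after telescoping the product-of-marginals into a sum of per-opponent errors), while the second is that rounding error passed through the best-response maps, hence amplified by the Lipschitz constant of $\Br^\sigma$. Since $U_i(\cdot,b_i)$ is linear in $b_i$ and softmax at temperature $\sigma$ contracts like $1/\sigma$, this constant is of order $\tfrac{1}{\sigma}\,\Lu\,\La$ up to $\sqrt{\max_i|A_i|}$ factors. Requiring the sum to be at most $\tau$ reproduces \eqref{eq: granularity M}: the leading $1$ comes from direct rounding and the $\tfrac{\sqrt{\max_i|A_i|}}{\sigma}\,\Lu\,\La\,|I|$ term from the Lipschitz propagation.

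For the $\epsilon$-Nash property, fix a consistent $z^{\dagger}=(\bdagger,\pidagger)$, a player $i$, and a deviation $\pi_i'$, and decompose the regret into three pieces,
\[
U_i(\pi_i',\pidagger_{-i})-U_i(\pidagger_i,\pidagger_{-i})
=\underbrace{\left[U_i(\pi_i',\pidagger_{-i})-U_i(\pi_i',\bdagger_i)\right]}_{\text{(I)}}
+\underbrace{\left[U_i(\pi_i',\bdagger_i)-U_i(\pidagger_i,\bdagger_i)\right]}_{\text{(II)}}
+\underbrace{\left[U_i(\pidagger_i,\bdagger_i)-U_i(\pidagger_i,\pidagger_{-i})\right]}_{\text{(III)}}.
\]
Terms (I) and (III) are utility perturbations from replacing the true opponent distribution $\pidagger_{-i}$ with the $\tau$-close belief $\bdagger_i$; since $U_i$ is linear in the opponent distribution with coefficients bounded by $\Lu$, consistency gives each at most $\Lu\sqrt{|A_{-i}|}\,\tau$. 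Term (II) is the entropy gap of the smooth best response: because $\pidagger_i=\Br^\sigma_i(\bdagger_i)$ maximizes $U_i(\cdot,\bdagger_i)+\sigma H(\cdot)$ with $H$ the Shannon entropy, I get $U_i(\pi_i',\bdagger_i)-U_i(\pidagger_i,\bdagger_i)\le\sigma\big(H(\pidagger_i)-H(\pi_i')\big)\le\sigma\log\max_i|A_i|$. Hence the regret is at most $\sigma\log\max_i|A_i|+2\,\Lu\sqrt{|A_{-i}|}\,\tau$, and the two constraints in \eqref{eq: sigma and tau} are (more than) sufficient to force each summand below $\epsilon/2$. The extra $\sigma$ and $\La$ factors in the stated $\tau$ bound suggest the authors instead route the transfer through the map, bounding $\|\pidagger_i-\Br^\sigma_i(\pidagger_{-i})\|$ via the $1/\sigma$ Lipschitz constant; either route closes under the assumption.

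The main obstacle I anticipate is the non-emptiness step, specifically controlling the rounding error once it is pushed through all players' best-response maps simultaneously. The difficulty is twofold: one must extract a clean Lipschitz constant for $\Br^\sigma$ in the belief (which blows up like $1/\sigma$ and has to be beaten by taking $M$ large), and one must account for beliefs entering as \emph{product} distributions over $A_{-i}$, so per-coordinate grid errors must be aggregated via a telescoping bound rather than naively. Once the Lipschitz constant and the product telescoping are pinned down, matching the estimate to \eqref{eq: granularity M} is bookkeeping; by contrast, the $\epsilon$-Nash argument is the standard logit entropy-gap estimate plus a linear perturbation and should pose no conceptual difficulty.
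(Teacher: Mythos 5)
Your proposal is correct, and its existence half coincides with the paper's proof: a Brouwer fixed point $\pi^*$ of the joint smooth best response, nearest-grid rounding to get $b^{\dagger}$, and a triangle inequality in which the per-opponent rounding error $\max_{i'}|A_{i'}|\cdot|I|/M$ is pushed through the $1/\sigma$-Lipschitz logit map (the paper's Lemma \ref{lemma: gao}), reproducing \eqref{eq: granularity M} exactly as you sketch. The genuine divergence is in the $\epsilon$-Nash half. The paper works in strategy space first: it bounds $\|\pi^{\dagger}_i - \Br^{\sigma}_i(\pi^{\dagger}_{-i})\|_2 \le \tau \cdot \Lu \cdot \La/\sigma$ via the same Lipschitz lemma, converts this to a utility gap of $\epsilon/2$ using the $\sqrt{|A|}$-scaled Lipschitzness of $U_i$ in the strategy (that is the estimate \eqref{eq: bound on pi}), and only then applies the entropy-gap bound $\sigma\log\max_{i}|A_i|\le \epsilon/2$ — a two-term decomposition routed through the smooth best response to the \emph{true} opponent play. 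You instead stay in utility space with a three-term decomposition that swaps $\pi^{\dagger}_{-i}$ for the belief $b^{\dagger}_i$ inside $U_i$ twice (each swap costing $\Lu\sqrt{|A_{-i}|}\,\tau$ by linearity and consistency) and applies the entropy gap at $b^{\dagger}_i$, where $\pi^{\dagger}_i$ is the exact logit optimizer. Your route never invokes the $1/\sigma$ softmax Lipschitz constant in this half, and — as you correctly observe — this shows the $\sigma/\La$ factor in the $\tau$-condition of \eqref{eq: sigma and tau} is an artifact of the paper's strategy-space detour: for this proposition a condition of the form $\tau \lesssim \epsilon/\bigl(\Lu\cdot\sqrt{|A|}\bigr)$ would already suffice, which is strictly weaker. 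What the paper's route buys is the intermediate strategy-space estimate \eqref{eq: bound on pi} (proximity of $\pi^{\dagger}$ to a smooth-best-response fixed point), at the price of the extra $1/\sigma$. Two small cautions: in your terms (I) and (III), consistency is stated as an $\ell_2$ bound on concatenated marginals $(b^{\dagger}_{ij})_{j\ne i}$ versus $(\pi^{\dagger}_j)_{j\ne i}$, while $U_i$ is linear in the induced \emph{product} distribution on $A_{-i}$, so the telescoping-for-products inequality you flag in the existence half is needed here too (the paper is itself loose on this point in \eqref{eq: U over b}); and your direct bound $2\Lu\sqrt{|A_{-i}|}\,\tau\le\epsilon/2$ follows from \eqref{eq: sigma and tau} only together with the smallness of $\sigma$, which holds in the relevant small-$\epsilon$ regime, so your "(more than) sufficient" claim stands.
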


For each epoch $k$, we denote $z^k = (b^k, \pi^k)$ as the state in epoch $k$. Given $\xi$, the state transition probability matrix is $P^{\xi}= (P^{\xi}_{zz'})_{zz' \in \mathcal{Z}}$, where $P^{\xi}_{zz'}$ is the probability of state transitioning from $z = (b,\pi)$ to $z' = (b', \pi')$ in one epoch:
\[ P^{\xi}_{zz'} := \mathbb{P}(z^{k+1} = z' \mid z^k = z), \quad \forall z, z' \in \mathcal{Z}. \]
The following lemma shows that the state transition Markov chain is finite and has a unique stationary distribution. 

\begin{lemma}\label{lemma: P_ep}
    The Markov process described by transition $P^\xi$ has a finite state space and is aperiodic and irreducible for all $\xi \in (0, 1)$, and has a unique stationary distribution $\mu^{\xi}$.
\end{lemma}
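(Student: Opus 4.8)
The plan is to verify the three standard properties of the transition matrix $P^\xi$---finite state space, aperiodicity, and irreducibility---and then invoke the standard theorem that a finite, irreducible, aperiodic Markov chain has a unique stationary distribution. Finiteness of $\mathcal{Z}$ is immediate: each $\mathcal{B}_i$ is a product of discretized simplices $\Delta_j^M$, each of which is finite (finitely many grid points), so $\mathcal{B}$ is finite, and since each $\pi_i = \Br_i^\sigma(b_i)$ is determined by $b_i$, the state space $\mathcal{Z}$ is in bijection with a subset of $\mathcal{B}$ and hence finite.

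For \textbf{irreducibility}, I would show that from any state $z = (b,\pi)$ the chain can reach any other state $z' = (b',\pi')$ in a single epoch with positive probability. The key observation is the full-support resampling assumption $\psi_i(b_i' \mid b_i) \geq \lambda > 0$. In one epoch, the event that every player $i$ simultaneously conducts a test (each occurs with probability $\gamma_i > 0$, which are independent across players) and has their null hypothesis rejected, and then resamples exactly the target belief $b_i'$, occurs with positive probability. The only subtlety is ensuring rejection can happen with positive probability regardless of the current belief: since the hypothesis test has Type I and Type II error rates in $(0,1)$ strictly, rejection is a positive-probability event for any underlying belief/strategy pair. Thus $P^\xi_{zz'} > 0$ for all $z,z'$, which is more than enough for irreducibility. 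I should note that even if testing-and-rejection had probability zero for some configuration, the exploration path (fail to reject, then explore with probability $\xi^{f_i(\cdot)} > 0$, then resample $b_i'$) provides an alternative positive-probability route, so the argument is robust.

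For \textbf{aperiodicity}, it suffices to exhibit one state with a positive self-loop probability, $P^\xi_{zz} > 0$, since an irreducible chain with a single aperiodic state is aperiodic. The self-loop is realized by the ``no update'' branch: for each player, either no test is conducted (probability $1 - \gamma_i > 0$), or a test is conducted, the null fails to reject, and no exploration occurs (probability $\gamma_i \cdot (\text{fail to reject}) \cdot (1 - \xi^{f_i(\cdot)}) > 0$). Either way each player retains $b_i^{k+1} = b_i^k$ with positive probability, and by independence across players the joint event that all players retain their beliefs has positive probability, giving $P^\xi_{zz} > 0$. In fact, since I will already have shown $P^\xi_{zz'} > 0$ for \emph{all} pairs in the irreducibility step, taking $z' = z$ gives aperiodicity for free.

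Given finiteness, irreducibility, and aperiodicity, the existence and uniqueness of the stationary distribution $\mu^\xi$ follows from the standard Perron--Frobenius/ergodic theorem for finite Markov chains. I do not anticipate a genuine obstacle here; the lemma is foundational and its proof is routine. The one point requiring mild care is confirming that rejection has strictly positive probability for every belief configuration (so that irreducibility is not accidentally broken), but this is guaranteed because the test threshold in \eqref{eq:rejection_region} is a fixed finite quantity while the empirical estimate $\hat\pi_{-i}$ has full support over its possible realizations, so both the rejection and non-rejection events carry positive probability for any $\pi_{-i}$. The bulk of the writeup will simply be organizing these three positivity observations cleanly.
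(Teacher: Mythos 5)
Your overall strategy is the same as the paper's: show that \emph{every} entry $P^{\xi}_{zz'}$ is strictly positive (using $\gamma_i>0$, the positive exploration probability $\xi^{f_i(\cdot)}>0$, and the full-support resampling $\psi_i(\cdot\mid\cdot)\geq\lambda$), from which irreducibility and aperiodicity are immediate, and then invoke the standard ergodic theorem for finite chains. However, your \emph{primary} route has a genuine soft spot: you assert that rejection of $H_0$ is a positive-probability event for every belief/strategy configuration, citing that the Type I and Type II error rates lie strictly in $(0,1)$. The paper never establishes this---Proposition~\ref{prop: hypothesis test} only gives \emph{upper} bounds of $\alpha$ on both errors, with no lower bound, and positivity of the rejection probability would require separately verifying that the rejection region in \eqref{eq:rejection_region} is nonempty within the set of achievable empirical distributions, i.e.\ that $\tau + \sqrt{\tfrac{|A_{-i}|}{2T}\ln(2/\alpha)}$ is strictly below the maximal attainable distance $\|\hat{\pi}_{-i}-b_i\|_2$; nothing in Assumption~\ref{assump: sigma, tau, M} or the choice of $T(\alpha)$ guarantees this in general. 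The paper avoids needing it entirely: its lower bound on $P^{\xi}_{zz'}$ conditions on all players testing, then sums over \emph{all} possible joint test outcomes $I_r\subseteq I$ (rejecters resample directly; non-rejecters explore with probability $\xi^{f_i(U_i(\pi_i,b_i))}>0$ and then resample), so that regardless of which outcome occurs, reaching $b'$ has probability at least $\prod_i \gamma_i\cdot\min_{I_r\subseteq I}\prod_{i\notin I_r}\xi^{f_i(\cdot)}\cdot\prod_i\psi_i(b_i'\mid b_i)>0$. Your own fallback (``fail to reject, then explore'') is essentially this argument, and it should be promoted to the main proof rather than kept as a remark.

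One further subtlety to handle in the writeup: the test outcomes are \emph{not} independent across players, since all players' tests are computed from the same realized action sequence $(a^t)_{t=1}^{T}$ (player $i$ uses $(a^t_{-i})$). So you cannot simply multiply per-player rejection probabilities as your main route implicitly does. The paper's decomposition sidesteps this too: independence is invoked only for the test-initiation decisions, the exploration coin flips, and the resampling draws, while the correlated test outcomes are absorbed into the sum over subsets $I_r$, whose probabilities total one. With your fallback made primary and this correlation point addressed, your proof coincides with the paper's. Your finiteness and aperiodicity arguments (self-loop via the no-test branch, or via $\psi_i(b_i\mid b_i)\geq\lambda$ once all entries are positive) are fine as stated.
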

A state is stochastically stable if it is in the support set of the stationary distribution $\mu^{\xi}(z)$ as $\xi\to 0$. 

\begin{definition}[\cite{Young1993}]
    A state $z \in \mathcal{Z}$ is \textit{stochastically stable} 
    if \[\lim_{\xi \to 0} \mu_z^\xi > 0,\]
    where $\mu_z^{\xi}$ is the stationary distribution of state $z$ given $P^{\xi}$.
\end{definition}

We denote the set of stochastically stable states by $\mathcal{Z}^*$. These are the states of the belief-based learning dynamics that retain positive probability in the limit stationary distribution as the perturbation parameter $\xi \to 0$. The notion of stochastic stability captures the asymptotic state distribution of the Markov process induced by the learning dynamics under vanishing $\xi$. Formally, for any $\delta > 0$, there exists $\bar{\xi} > 0$ such that for all $\xi \in (0, \bar{\xi})$, the stationary distribution $\mu^\xi$ places at least $1 - \delta$ probability mass on the stochastically stable set $\mathcal{Z}^*$. This implies that when $\xi$ is sufficiently small, the learning dynamics visits states in $\mathcal{Z}^*$ for the majority (more than $1-\delta$ fraction) of epochs. Therefore, states in the stochastically stable set can be viewed as \emph{high-probability long-run outcomes} of the learning dynamics—that is, the beliefs and strategies most frequently observed in the evolution of the system.

 Before presenting the theorem, we introduce the following assumption, which ensures that for each player 
$i$, regardless of the opponents’ beliefs and strategies, there exists a belief of player 
$i$ that is inconsistent with the opponents’ strategies, and the corresponding smooth best response of player $
i$ is also inconsistent with the opponents’ beliefs. 


\begin{assumption}\label{assump: bad strategy}
        For any player $i \in I$, any belief profile $b_{-i} \in \prod_{j \neq i} \mathcal{B}_j$ and strategy profile $\pi_{-i} = \Br^\sigma_{-i}(b_{-i})$, there exists $\tilde{b}_i \in \Delta^M_{-i}$, and $\tilde{\pi}_i = \Br^{\sigma}_i(\tilde{b}_i)$ such that belief $\tilde{b}_i$ is at least $\tau$ distance away from strategy $\pi_{-i}$:
        \begin{align*}
            \|\tilde{b}_i - \pi_{-i}\|_2 > \tau,
        \end{align*}
        and strategy $\tilde{\pi}_i$ is at least $\tau$ distance away from belief $b_{ji}$ for any $j \neq i$:
        \begin{align*}
            \|\tilde{\pi}_i -  b_{ji}\|_2 > \tau, \quad \forall j \neq i.
        \end{align*}
\end{assumption}
This assumption ensures that, regardless of the system state, each player $i$ has a belief $\tilde{b}_i$ that can, with high probability, trigger belief resampling for all players. Specifically, player $i$ will resample their own belief with probability greater than $1 - \xi^{\bar{u}}$, because $\tilde{b}_i$ is at least $\tau$-distant from the opponents’ strategy profile $\pi_{-i}$. Furthermore, the corresponding smooth best response $\tilde{\pi}_i$ is inconsistent with the opponents’ beliefs about player $i$, which can also prompt the opponents to resample. As $\xi \to 0$, the probability of belief resampling triggered by $\tilde{b}_i$ approaches 1. In Appendix \ref{appendix: bad strategy}, we show in Lemma \ref{lemma:simple_condition} that this assumption is mild and can be satisfied when the image of each player’s smooth best response function contains more than $|I|$ strategies and the tolerance parameter $\tau$ is sufficiently small.



\begin{theorem} \label{theorem:main}
Suppose Assumptions \ref{assump: sigma, tau, M} and \ref{assump: bad strategy} hold. The stochastically stable state set is given by 
\[
\mathcal{Z}^{*} = \left\{ z =(b, \pi) \in \Zdagger \left\vert 
\min_{i \in I} f_i(U_i(\pi_i, b_i)) = \max_{z' = (b', \pi') \in \Zdagger} \min_{i \in I} f_i(U_i(\pi'_i, b'_i))\right.\right\}.
\]
\end{theorem}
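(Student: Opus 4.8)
The plan is to apply the resistance-tree (stochastic potential) characterization of \cite{Young1993} to the perturbed chain $P^{\xi}$, with $\xi$ as the vanishing noise parameter; Lemma \ref{lemma: P_ep} supplies the unique stationary distribution the framework requires. The first task is to identify the recurrent classes of the unperturbed process $P^{0}:=\lim_{\xi\to 0}P^{\xi}$, and I claim they are exactly the consistent states $\Zdagger$ (nonempty by Proposition \ref{lemma: sigma, tau, M}), each a singleton. Every $z^{\dagger}\in\Zdagger$ is absorbing: since all beliefs are within $\tau$ of the opponents' strategies, Proposition \ref{prop: hypothesis test} bounds the rejection probability by $\xi^{\bar{u}}\to 0$ while the exploration probability $\xi^{f_i}\to 0$, so no player resamples in the limit. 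To show every inconsistent state is transient, I would route it into $\Zdagger$ as follows: from $z\notin\Zdagger$ pick one inconsistent player $i$ and let only $i$ resample, an event of $\xi$-independent limiting probability $\gamma_i\prod_{j\neq i}(1-\gamma_j)>0$; by full support ($\psi_i\ge\lambda$) have $i$ resample to the belief $\tilde b_i$ of Assumption \ref{assump: bad strategy}. Because $\tilde b_i$ is $\tau$-far from $\pi_{-i}$ and its smooth best response $\tilde\pi_i$ is $\tau$-far from every $b_{ji}$, the successor state has all players inconsistent; from there all players resample and, again by full support, reach any prescribed $z^{\dagger}\in\Zdagger$ in one epoch with $\xi$-independent positive probability. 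Hence no inconsistent state is recurrent and the recurrent classes are precisely $\Zdagger$.

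The second task is to compute the edge resistances $r(z^{\dagger},z^{\dagger\dagger})$ between consistent states, where resistance is the exponent of the leading power of $\xi$ in the relevant transition probability. To leave $z^{\dagger}=(b^{\dagger},\pi^{\dagger})$ at least one player must resample; for a consistent player this occurs either by a Type-I error (probability $\le\xi^{\bar{u}}$, resistance $\ge\bar{u}$) or by exploration (probability of order $\xi^{f_i(U_i(\pi_i^{\dagger},b_i^{\dagger}))}$, resistance $f_i(U_i(\pi_i^{\dagger},b_i^{\dagger}))$). Since every $f_j$ is positive and $(\pi_i^{\dagger},b_i^{\dagger})$ is a feasible joint profile, $f_i(U_i(\pi_i^{\dagger},b_i^{\dagger}))\le\max_{\pi}\sum_{j}f_j(U_j(\pi))<\bar{u}$, so exploration is always the cheaper channel and the minimal cost of any single-player deviation is $c(z^{\dagger}):=\min_{i\in I}f_i(U_i(\pi_i^{\dagger},b_i^{\dagger}))$; this lower-bounds the resistance of leaving $z^{\dagger}$. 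For the matching upper bound I would let the least-satisfied player $i^{\star}$ explore to $\tilde b_{i^{\star}}$ (Assumption \ref{assump: bad strategy}): this single event of resistance $c(z^{\dagger})$ makes all players inconsistent, after which the reroute to any target consistent state runs at resistance $0$ exactly as in the transience argument. Therefore $r(z^{\dagger},z^{\dagger\dagger})=c(z^{\dagger})$ for every $z^{\dagger\dagger}\neq z^{\dagger}$, independent of the destination.

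This destination-independence collapses the stochastic-potential computation. In any $z^{\dagger}$-tree on $\Zdagger$ (a spanning tree oriented toward the root $z^{\dagger}$) each non-root node $z'$ contributes exactly its unique outgoing edge of resistance $c(z')$, so the total resistance equals $\sum_{z'\in\Zdagger\setminus\{z^{\dagger}\}}c(z')$ regardless of the tree's shape. Hence the stochastic potential of $z^{\dagger}$ is $\big(\sum_{z'\in\Zdagger}c(z')\big)-c(z^{\dagger})$, and minimizing it over the root is equivalent to maximizing $c(z^{\dagger})=\min_{i\in I}f_i(U_i(\pi_i^{\dagger},b_i^{\dagger}))$. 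By \cite{Young1993} the stochastically stable set is the union of recurrent classes of minimal stochastic potential, which are exactly the consistent states maximizing $\min_{i\in I}f_i(U_i(\pi_i,b_i))$ — the set asserted in Theorem \ref{theorem:main}.

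The hard part will be making the recurrent-class and resistance-$0$ steps fully rigorous with bounds uniform in $\xi$. Two points need care. First, although isolating a single resampling player circumvents the simultaneity issue, I must verify that Assumption \ref{assump: bad strategy} — whose $\tilde b_i$ is defined relative to the \emph{current} opponent profile — indeed forces an all-inconsistent successor when only $i$ moves, i.e. that $\|\tilde\pi_i-b_{ji}\|_2>\tau$ makes each opponent $j$ genuinely inconsistent against the updated profile. Second, I must show the reset-and-reroute transitions carry resistance exactly $0$: this requires an $\xi$-independent lower bound of the form $\prod_{i\in I}\gamma_i\,\psi_i(b_i^{\dagger}\mid\cdot)\ge\lambda^{|I|}\prod_{i\in I}\gamma_i>0$ on reaching a prescribed $b^{\dagger}$, together with the uniform domination $f_i<\bar{u}$ over all consistent states so that exploration is strictly the leading escape channel everywhere. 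Establishing these uniform estimates is the technical core; the ensuing potential computation is then purely combinatorial.
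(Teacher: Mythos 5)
Your proposal is correct and follows essentially the same route as the paper's proof: you identify the recurrent classes of the unperturbed chain with the singleton consistent states via the same two-step path through the Assumption~\ref{assump: bad strategy} belief (including the key check that $\|\tilde{\pi}_i - b_{ji}\|_2 > \tau$ renders every opponent inconsistent in the successor), derive the destination-independent escape resistance $\min_{i \in I} f_i(U_i(\pi_i, b_i))$ with matching upper and lower bounds, and collapse the $z$-tree potential computation exactly as in Lemmas~\ref{lemma:recurrent_class} and~\ref{lemma: out edge resistance} and Section~\ref{section: proof of main theorem}. The uniform-in-$\xi$ estimates you flag as the remaining technical core are precisely what the paper's Lemma~\ref{lemma: regular perturbation} (the regular-perturbation property, with the Type-I channel suppressed at order $\xi^{\bar{u}}$) supplies.
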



Theorem~\ref{theorem:main} shows that any stochastically stable state must be a consistent state, as defined in Definition~\ref{def:consistent_state}. Following Proposition~\ref{lemma: sigma, tau, M}, the strategy profile associated with every consistent state is an \(\epsilon\)-Nash equilibrium for sufficiently small \(\tau\), \(\sigma\), and large \(M\). Hence, the long-run outcomes of the learning process lie within the \(\epsilon\)-Nash equilibrium set. This property arises from the hypothesis test, which rejects states where empirical play deviates significantly from held beliefs with high probability, thereby preventing inconsistent belief-strategy tuples from attaining stochastic stability.

Furthermore, the long-run outcome of the learning dynamics selects among consistent states (i.e., approximate equilibria) those that maximize the minimum transformed utility across players. This effect arises from the structure of the exploration mechanism. The probability of leaving a consistent state depends on the likelihood that some player initiates exploration. The dominant contribution to this probability comes from the player with the highest exploration probability, \(\xi^{f_i(U_i(\pi_i, b_i))}\). Since \(\xi < 1\), this is governed by the player with the lowest transformed utility \(f_i(U_i(\pi_i, b_i))\), who is most inclined to explore. As a result, the equilibrium strategy profiles that maximize \(\min_{i \in I} f_i(U_i(\pi_i, b_i))\) among all approximate equilibria are the most stable and remain in the stochastically stable set as \(\xi \to 0\).

\begin{corollary}\label{cor:twocases}
Suppose Assumptions \ref{assump: sigma, tau, M} and \ref{assump: bad strategy} hold. 
\begin{itemize}
    \item[(i)] If \( f_i(u) = f_j(u) \) for any \( i, j \in I \) and any $u \in [\underline{u}, \bar{u}]$, where $\underline{u}$ (resp. $\bar{u}$) is the minimum (resp. maximum) utility of all players given all feasible $(\pi, b)$, then 
    \[
\mathcal{Z}^{*} = \left\{ z \in \Zdagger \left\vert 
\min_{i \in I} U_i(\pi_i, b_i) = \max_{z' = (b', \pi') \in \Zdagger} \min_{i \in I} U_i(\pi'_i, b'_i)\right.\right\}.
\]
\item If there exists a player $\hat{i}$ such that \( f_{\hat{i}}(u_{\hat{i}}) \leq  f_j(u_j) \) for all \( j \in I \setminus \{i\} \), $u_{\hat{i}} \in [\underline{u}_{\hat{i}}, \bar{u}_{\hat{i}}]$ and $u_{j} \in [\underline{u}_j, \bar{u}_j]$, where $\underline{u}_{i}$ (resp. $\bar{u}_{i}$) is the minimum (resp. maximum) utility of each player $i \in I$ given all feasible $(\pi_i, b_i)$, then 
\[
\mathcal{Z}^{*} = \left\{ z \in \Zdagger \left\vert 
U_{\hat{i}}(\pi_{\hat{i}}, b_{\hat{i}}) = \max_{z' = (b', \pi') \in \Zdagger} U_{\hat{i}}(\pi'_{\hat{i}}, b'_{\hat{i}})\right.\right\}.
\]
\end{itemize}
\end{corollary}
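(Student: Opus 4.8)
The plan is to derive both parts directly from Theorem~\ref{theorem:main}, which identifies $\mathcal{Z}^*$ as the set of consistent states in $\Zdagger$ maximizing $\min_{i \in I} f_i(U_i(\pi_i, b_i))$. The only work is to simplify this max--min objective under each structural assumption on the transformation functions, exploiting monotonicity of the $f_i$. The elementary fact I will use repeatedly is that for a nondecreasing function $g$ and any finite collection of reals $\{x_i\}$ one has $\min_i g(x_i) = g(\min_i x_i)$, and that applying a strictly increasing function to an objective preserves its set of maximizers.

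For part (i), all players share a common increasing transformation $f := f_i$. Because every utility value $U_i(\pi_i, b_i)$ realized at a consistent state lies in $[\underline{u}, \bar{u}]$, where $f$ is defined and increasing, I can pull the common $f$ outside the inner minimum: $\min_{i \in I} f(U_i(\pi_i, b_i)) = f\bigl(\min_{i \in I} U_i(\pi_i, b_i)\bigr)$. Maximizing this over $\Zdagger$ is then equivalent to maximizing $\min_{i \in I} U_i(\pi_i, b_i)$, since $f$ is increasing. Passing to the set of maximizers on both sides, and invoking strict monotonicity of $f$ (so that $f(x) = f(\max)$ iff $x = \max$), I will conclude that the characterization of $\mathcal{Z}^*$ in Theorem~\ref{theorem:main} coincides with the stated max--min-utility set.

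For part (ii), the uniform domination hypothesis states that $f_{\hat{i}}(u_{\hat{i}}) \le f_j(u_j)$ for every $j \ne \hat{i}$ and every pair of feasible utilities $u_{\hat{i}} \in [\underline{u}_{\hat{i}}, \bar{u}_{\hat{i}}]$, $u_j \in [\underline{u}_j, \bar{u}_j]$. Evaluated at any consistent state, each $U_i(\pi_i, b_i)$ lies in player $i$'s feasible utility range, so this inequality forces $\min_{i \in I} f_i(U_i(\pi_i, b_i)) = f_{\hat{i}}(U_{\hat{i}}(\pi_{\hat{i}}, b_{\hat{i}}))$ identically on $\Zdagger$. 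The max--min objective of Theorem~\ref{theorem:main} thus collapses to $\max_{z \in \Zdagger} f_{\hat{i}}(U_{\hat{i}}(\pi_{\hat{i}}, b_{\hat{i}}))$, and strict monotonicity of $f_{\hat{i}}$ again lets me replace the maximizers of $f_{\hat{i}} \circ U_{\hat{i}}$ with the maximizers of $U_{\hat{i}}$ itself, yielding the stated set.

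The substantive steps are entirely routine; the only points requiring care are (a) checking that the utilities realized at consistent states indeed fall within the intervals on which the assumptions about the $f_i$ are imposed, which follows from the definitions of $\underline{u}, \bar{u}$ (resp.\ $\underline{u}_i, \bar{u}_i$) as the extremal utilities over all feasible $(\pi, b)$, and (b) ensuring $f$ and $f_{\hat{i}}$ are strictly increasing, so that passing to argmax sets preserves equality rather than mere inclusion. If the $f_i$ are only weakly increasing, the argmax step gives $\mathcal{Z}^* \supseteq$ the stated set, with strict containment only possible among states sharing identical raw-utility profiles; I would either strengthen the hypothesis to strict monotonicity or record this caveat. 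I expect this strict-versus-weak monotonicity distinction to be the sole technical obstacle.
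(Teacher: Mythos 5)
Your proposal is correct and matches the paper's (implicit) argument: the paper offers no separate proof of Corollary \ref{cor:twocases}, treating it exactly as you do, namely as a direct specialization of Theorem \ref{theorem:main} via the identity $\min_{i \in I} f(x_i) = f\bigl(\min_{i \in I} x_i\bigr)$ for a common increasing $f$ in case (i), and via the collapse $\min_{i \in I} f_i(U_i(\pi_i, b_i)) = f_{\hat{i}}(U_{\hat{i}}(\pi_{\hat{i}}, b_{\hat{i}}))$ forced by the uniform domination hypothesis in case (ii). Your caveat about strict versus weak monotonicity is well taken and is the right point of care; the paper's convention that each $f_i$ is ``increasing'' is intended in the strict sense, under which your argmax-equivalence step yields equality of the two sets rather than mere inclusion.
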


Corollary \ref{cor:twocases} shows that the utility transformation functions \( \{f_i(\cdot)\}_{i \in I} \) govern how equilibrium refinement may favor certain players over others by modulating their exploration probabilities. When \( f_i \) is identical across all players, the learning dynamics select equilibria that maximize the minimum utility across players, corresponding to a max-min equilibrium refinement (Corollary \ref{cor:twocases} (i)). More generally, the choice of \( \{f_i(\cdot)\}_{i \in I} \) influences which player is most likely to initiate exploration at equilibrium. A player $\hat{i}$ whose function \( f_{\hat{i}}(\cdot) \) consistently maps their utility to a lower value than those of others across equilibria will have the highest exploration probability at all equilibrium states. As a result, the equilibrium refinement process selects equilibria that maximize the utility of player $\hat{i}$, since such equilibria are least likely to be destabilized by their exploration (Corollary \ref{cor:twocases} (ii)). 

\paragraph{Example $1$ (Cooperative Outcome):} 
Two players simultaneously choose between Stag ($S$) and Hare ($H$).  The payoff matrix is
\[
\begin{array}{|c|c|c|}
\hline
 \text{Player 1}\backslash\text{Player 2}   & S & H \\ \hline
S & (4,4)  & (0,3)  \\ \hline
H & (3,0)  & (3,3)  \\ \hline
\end{array}.
\]

The Nash Equilibria are:
\begin{itemize}
  \item Pure Nash: \((S,S)\), \((H,H)\)
  \item Mixed Nash: each player plays \(S\) with probability \ \(\tfrac{3}{4}\), and \(H\) with probability\ \(\tfrac{1}{4}\). 
\end{itemize}
Suppose each player $i$ chooses identity utility transformation function $f_i(u) = u$. In the mixed NE and the pure NE $(H,H)$, the minimum utility of both players is $3$; whereas in $(S,S)$, the minimum utility of both players is $4$. Therefore, for sufficiently small $\epsilon$ and $\xi$, the learning plays the fully cooperative and utility maximizing outcome $(S, S)$ for the majority of the time.

\paragraph{Example $2$ (Transformation Function Governs Convergence):}

Two players simultaneously choose between  Opera (\(O\)) and Football (\(F\)).  The payoff matrix is
\[
\begin{array}{|c|c|c|}
\hline
\text{Player 1}\backslash\text{Player 2} & O      & F        \\ \hline
O   & (2,\,1)    & (0,\,0)    \\  \hline
F   & (0,\,0)    & (1,\,2)    \\ \hline
\end{array}.
\]
The Nash Equilibria are:
\begin{itemize}
  \item Pure Nash: \((O,O)\), \((F,F)\)
  \item Mixed Nash: player $1$ plays \(O\) with probability \(\tfrac{2}{3}\) and  \(F\) with probability\ \(\tfrac{1}{3}\); player $2$ plays \(O\) with probability\ \(\tfrac{1}{3}\), and \(F\) with probability\ \(\tfrac{2}{3}\). 
\end{itemize}
Suppose each player $i$ chooses the identity utility transformation function $f_i(u) = u$. In the mixed NE, the minimum utility of the two players is $\tfrac{2}{3}$; whereas in $(O,O)$ and $(F,F)$, the minimum utility of both players is $1$. Therefore, for sufficiently small $\epsilon$ and $\xi$, learning players strategy profiles $(O,O)$ and $(F,F)$ for most of the time. The mixed Nash equilibrium is not in the stochastically stable set. 

On the other hand, suppose that player $1$ chooses the identity utility transformation function $f_1(u) = u$, and player $2$ chooses $f_2(u) = u-0.1$. Then, in $(O,O)$, the minimum utility of both players is $0.9$, whereas in $(F,F)$, the minimum utility of both players is $1$. In this case, for sufficiently small $\epsilon$, the only stochastically stable state is $(F,F)$, where player $2$ receives the higher utility of $2$ by increasing their exploration probability.

\section{Proof of Theorem \ref{theorem:main}} \label{section: proof of main theorem}
To prove Theorem~\ref{theorem:main}, we proceed in three steps. First, we show that the joint evolution of beliefs and strategies forms a regular perturbation of an unperturbed Markov chain with $\xi=0$, where hypothesis tests have no error and exploration occurs with probability tending to zero. Second, we show that the recurrent classes of the unperturbed chain are exactly the consistent states, and every inconsistent state transitions to one of these consistent states with positive probability. Finally, we apply the resistance-tree method from \cite{Young1993} to compute the stochastic potential of each recurrent class and identify the stochastically stable set as the set of recurrent classes with minimum stochastic potential in a regular perturbed Markov chain. We compute the resistance of transitions between recurrent classes in our learning dynamics and show that the classes minimizing this potential correspond exactly to the set of consistent states that maximize the minimum transformed utility across players.

\subsection{Regular Perturbation}
We define the limiting process as the exploration parameter $\xi \to 0$. In this limit, the original Markov process defined by the transition matrix $P^{\xi} = (P^{\xi}_{zz'})_{z,z' \in \mathcal{Z}}$ converges to an unperturbed process governed by the transition matrix $P^0 = (P^0_{zz'})_{z,z' \in \mathcal{Z}}$. In the unperturbed process, both Type I and Type II error rates in hypothesis testing become $0$, and players do not explore when they fail to reject the null hypothesis. As a result, for each player $i \in I$, if $\|b_i - \pi_{-i}\|_2 \leq \tau$, then they retain their current belief with probability $1$. Conversely, if $\|b_i - \pi_{-i}\|_2 > \tau$, then the player rejects the null hypothesis and updates their belief with probability $1$.

We introduce the definition of a regular perturbation and show that for any $\xi>0$, the original Markov process $P^{\xi}$ is a regular perturbation of the unperturbed process $P^0$.


\begin{definition}[\citet{Young1993}]
Let $P^0$ be the transition matrix of a stationary Markov chain defined on a finite state space $\mathcal{Z}$. Let $P^{\xi}$ be the transition matrix of a family of Markov chains on $\mathcal{Z}$ perturbed from $P^0$, indexed by $\xi \in (0,\bar{\xi})$ for some $\bar{\xi}>0$. The family $\{P^{\xi}\}_{\xi > 0}$ is a \emph{regular perturbation} of $P^0$ if the following conditions hold for all $z, z' \in \mathcal{Z}$:
\begin{enumerate}[(i)]
    \item $P^{\xi}$ is aperiodic and irreducible for all $\xi \in (0,\bar{\xi})$.
    \item $\lim_{\xi \to 0} P^{\xi}_{zz'} = P^0_{zz'}$.
    \item If $P^{\xi}_{zz'} > 0$ for some $\xi$, then there exists $r_{zz'} \geq 0$ such that $0 < \lim_{\xi \to 0} P^{\xi}_{zz'}\cdot \xi^{-r_{zz'}} < \infty$.
\end{enumerate}
\end{definition}

\begin{lemma} \label{lemma: regular perturbation}
\begin{enumerate}[label=(\arabic*), itemsep=1em]

\item For any $z= (b, \pi), z'= (b', \pi') \in \mathcal{Z}$, the transition probability of the unperturbed process $P^0$ satisfies:
\[
P^0_{zz'} = \prod_{i \in I} P^{0}_{i, zz'},
\]
where $P^{0}_{i, zz'}$ denotes the probability of player $i$ updating their belief from $b_i$ to $b'_i$:
\begin{align} \label{eq: P0}
P^{0}_{i, zz'} &= 
\begin{cases}
    1, & \text{if } \|b_i - \pi_{-i}\|_2 \leq \tau \text{ and } b_i = b'_i, \\
    0, & \text{if } \|b_i - \pi_{-i}\|_2 \leq \tau \text{ and } b_i \neq b'_i, \\
    (1-\pc) + \pc \cdot \psi_i(b'_i|b_i), & \text{if } \|b_i - \pi_{-i}\|_2 > \tau \text{ and } b_i = b'_i, \\
    \pc \cdot \psi_i(b'_i|b_i), & \text{if } \|b_i - \pi_{-i}\|_2 > \tau \text{ and } b_i \neq b'_i.
\end{cases}
\end{align}

\item The Markov process $P^{\xi}$ is a regular perturbation of $P^0$. Moreover, for any $z = (b,\pi), z' = (b', \pi') \in \mathcal{Z}$,
\[
0 < \lim_{\xi \to 0} P^{\xi}_{zz'}\cdot \xi^{-r_{zz'}} < \infty,
\]
where
\begin{align}\label{eq:rzz}
r_{zz'} = \sum_{\substack{i \in \{i \in I ~|~ b_i \neq b'_i,\\ \|b_i - \pi_{-i}\|_2 \leq \tau\}}} f_i(\U{i}{\pi_i,b_i}).
\end{align}
\end{enumerate}
\end{lemma}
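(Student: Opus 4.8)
The plan is to handle both parts by computing, for each ordered pair $(z,z')$ with $z=(b,\pi)$ and $z'=(b',\pi')$, the exact leading-order behaviour of $P^\xi_{zz'}$ as $\xi\to 0$. The structural fact I would exploit is that, \emph{conditional on the realized observations} $(a^t)_{t=1}^{\T}$, the per-epoch belief updates are independent across players: each player $i$ independently flips a test coin (probability $\gamma_i$), reads off the deterministic indicator $\mathbf{1}\{(a^t_{-i})\in R_i(\xi^{\bar{u}})\}$ of whether the test rejects (call this event $E_i$), flips an exploration coin of bias $\xi^{f_i(\U{i}{\pi_i,b_i})}$ when it does not reject, and resamples through $\psi_i(\cdot\mid b_i)$. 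Hence I write $P^\xi_{zz'}=\mathbb{E}_{(a^t)\sim\pi}\big[\prod_{i\in I} q_i^\xi(b'_i\mid b_i,\mathbf{1}\{E_i\})\big]$, where $q_i^\xi$ is the elementary case-wise update probability of a single player. Writing out $q_i^\xi$ in the four cases (reject/not-reject $\times$ change/keep belief) is routine and isolates the only source of $\xi$-dependence: the factor $\xi^{f_i}$ that appears exactly when player $i$ fails to reject but still changes belief through exploration.

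For Part (1), I would send $\xi\to 0$. By Proposition~\ref{prop: hypothesis test} the significance level $\xi^{\bar{u}}\to 0$ forces both error probabilities to $0$, so the test becomes deterministic — it rejects iff $\|b_i-\pi_{-i}\|_2>\tau$ — and the exploration factor $\xi^{f_i}\to 0$ vanishes. Because the limiting outcomes are deterministic, the conditioning on observations disappears, the expectation factorizes, and $P^0_{zz'}=\prod_i P^0_{i,zz'}$; a case analysis on whether player $i$ is $\tau$-consistent and whether $b_i=b'_i$ then reproduces \eqref{eq: P0}. Conditions (i) and (ii) of regular perturbation follow at once: (i) is Lemma~\ref{lemma: P_ep}, and (ii) follows from the order computation below, since transitions that force a consistent player to move have $P^\xi_{zz'}\to 0=P^0_{zz'}$ while all others converge to the product formula.

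For Part (2)/condition (iii) I would split the expectation over the \emph{good event} $G$ that every player's test returns the statistically correct verdict. Each individual error is bounded by $\xi^{\bar{u}}$, so $\Pr(G)=1-O(\xi^{\bar{u}})\to 1$, and — crucially — conditional on $G$ every indicator $\mathbf{1}\{E_i\}$ is pinned down by the consistency status of player $i$, so $\prod_i q_i^\xi$ is deterministic and factorizes. On $G$ a consistent player who keeps their belief contributes a factor $\to 1$, an inconsistent player contributes an $O(1)$ factor (reject–resample through $\psi_i\ge\lambda>0$), and a consistent player who \emph{changes} belief must have explored, contributing exactly $\gamma_i\,\xi^{f_i(\U{i}{\pi_i,b_i})}\,\psi_i(b'_i\mid b_i)$. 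Multiplying these yields $\Pr(G)\,\Pi_G(\xi)=K\,\xi^{\,r_{zz'}}(1+o(1))$ with $r_{zz'}$ exactly the sum in \eqref{eq:rzz} and $K>0$ a finite constant (a product of the strictly positive numbers $\gamma_i$ and $\psi_i(\cdot\mid\cdot)$). Bounding the complementary event $G^c$ crudely — the product of per-player probabilities is at most $1$ — gives a contribution at most $\Pr(G^c)=O(\xi^{\bar{u}})$.

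The main obstacle, and the one place the definition of $\bar{u}$ is used, is showing this $O(\xi^{\bar{u}})$ bad-event mass is negligible relative to $\xi^{\,r_{zz'}}$, i.e. that $\bar{u}>r_{zz'}$. Here I would lean on consistency: the resistance sums $f_i(\U{i}{\pi_i,b_i})$ only over players within $\tau$ of $\pi_{-i}$, so for each such $i$ the anticipated utility $\U{i}{\pi_i,b_i}$ is within $O(\tau)$ of the true $U_i(\pi)$; hence $r_{zz'}$ is within $O(\tau)$ of $\sum_{i\in C} f_i(U_i(\pi))\le\sum_{i\in I} f_i(U_i(\pi))<\bar{u}$, and for $\tau$ small the strict inequality $\bar{u}>r_{zz'}$ survives. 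Without consistency the quantities $\U{i}{\pi_i,b_i}$ could be arbitrary and this comparison would fail; this is precisely why Type-I-error ``shortcuts'' — which would let a consistent player change belief at cost $\xi^{\bar{u}}$ rather than $\xi^{f_i}$ — are dominated. Combining the two pieces gives $P^\xi_{zz'}=K\,\xi^{\,r_{zz'}}(1+o(1))$, so $\xi^{-r_{zz'}}P^\xi_{zz'}\to K\in(0,\infty)$, which is condition (iii), and the regular-perturbation claim follows.
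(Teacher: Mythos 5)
Your proposal is correct and follows the same skeleton as the paper's proof: Part (1) is argued identically (test errors vanish at rate $\xi^{\bar{u}}$ by Proposition~\ref{prop: hypothesis test}, so verdicts become deterministic in the limit and the transition factorizes across players), and Part (2) rests on the same good-event/union-bound decomposition. The packaging of Part (2) differs, though: you condition once on the event $G$ that every player's test verdict matches their consistency status, note that on $G$ all per-player factors are pinned down, and read off a single exact expansion $P^{\xi}_{zz'} = K\,\xi^{r_{zz'}}(1+o(1)) + O(\xi^{\bar{u}})$, whereas the paper proves two one-sided bounds — a lower bound from one witness event (non-movers do not test; movers test and then reject, or fail to reject and explore, according to their consistency status) and an upper bound by summing over all rejection subsets $I_r \subseteq I_d$ and bounding the wrong-verdict terms by $|I|\cdot\xi^{\bar{u}}$. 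Your version is a cleaner bookkeeping and exhibits the explicit constant $K$ that is the common value of the paper's two limits. The one substantive divergence is the step $\bar{u} > r_{zz'}$, needed so the $O(\xi^{\bar{u}})$ bad-event mass is negligible against $\xi^{r_{zz'}}$: the paper simply asserts $\sum_{i \in I_d \cap I_c} f_i(U_i(\pi_i,b_i)) \leq \max_{\pi \in \Delta}\sum_{i \in I} f_i(U_i(\pi))$, which is not literally immediate since the anticipated utilities $U_i(\pi_i,b_i)$ are evaluated at \emph{different} profiles $(\pi_i,b_i)$ for different $i$; your consistency-based argument (each $i$ contributing to $r_{zz'}$ has $\|b_i - \pi_{-i}\|_2 \leq \tau$, so $U_i(\pi_i,b_i)$ is within $O(\tau)$ of $U_i(\pi)$) correctly identifies why the restriction to consistent players in \eqref{eq:rzz} is exactly what makes the comparison work, and is more careful than the paper at this point. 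Be aware, however, that your fix quietly imports two hypotheses the paper does not state: continuity of the $f_i$ (the paper assumes only monotonicity, so an $O(\tau)$ perturbation of the utility need not yield a small perturbation of $f_i$) and a $\tau$ small relative to the unquantified slack in $\bar{u} > \max_{\pi}\sum_i f_i(U_i(\pi))$. Since $\bar{u}$ is a free design parameter, the clean repair — for either proof — is to pick it with unconditional margin, e.g.\ $\bar{u} > \sum_{i \in I} f_i\bigl(\max_{a \in A} u_i(a)\bigr)$, which dominates every possible $r_{zz'}$ without any appeal to continuity or to the size of $\tau$.
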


In Lemma \ref{lemma: regular perturbation}, part (1) formalizes the structure of transitions in the unperturbed process: players update beliefs only when the hypothesis test rejects the null, and belief updates are resampled over the discretized belief space. Part (2) verifies that the belief-based learning dynamics define a \emph{regular perturbation} of the limiting Markov process $P^0$ as the exploration parameter $\xi \to 0$, and shows how  every nonzero transition in the perturbed process changes with $\xi$. This result is essential for applying the resistance-tree framework of stochastic stability. The proof of Lemma \ref{lemma: regular perturbation} is in Appendix \ref{apx:regular_perturbation}. 

\subsection{Recurrent Communication Class of the Unperturbed Process}

A key concept of Markov chains is the recurrent communication classes. Once the chain enters a recurrent communication class, it never leaves, and any state in the class can be reached from any other state in the class with positive probability. This is formally defined as follows:


\begin{definition}
    A recurrent communication class of a Markov chain is a non-empty subset of states \( C \subseteq \mathcal{Z} \) such that:
    \begin{enumerate}[(i)]
        \item For all \( z, w \in C \), there exist integers \( k, k' \geq 0 \) such that
        \[
        \Pr(z^k = w \mid z^0 = z) > 0, \quad \text{and} \quad \Pr(z^{k'} = z \mid z^0 = w) > 0.
        \]
        That is, all states in \( C \) communicate with each other.
    
        \item For any \( z \in C \) and any \( w \notin C \), we have
        \[
      \Pr(z^k = w \mid z^0 = z) = 0, \quad \forall k \geq 0.
        \]
        That is, the class \( C \) is \emph{closed}: the Markov chain cannot exit once it enters.
    \end{enumerate}
    \end{definition}

In the following lemma, we show that each consistent belief state $z \in \Zdagger$ forms a recurrent communication class of the unperturbed Markov process $P^0$. 

\begin{lemma}\label{lemma:recurrent_class}
    For each consistent belief state $z \in \Zdagger$, $\{z\}$ is a recurrent communication class of the unperturbed process $P_0$. Furthermore, there are no other recurrent communication class of $P_0$. 
\end{lemma}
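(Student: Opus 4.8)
The plan is to read off the structure of the unperturbed kernel $P^0$ from Lemma~\ref{lemma: regular perturbation}(1) and exploit the sharp dichotomy it encodes: under $P^0$ a player can alter their belief \emph{only} when they are currently inconsistent. Concretely, if $\|b_i-\pi_{-i}\|_2\le\tau$ then $P^0_{i,zz}=1$ (player $i$ is frozen), whereas if $\|b_i-\pi_{-i}\|_2>\tau$ then player $i$ resamples to any $b_i'\in\mathcal{B}_i$ with probability $\gamma_i\psi_i(b_i'\mid b_i)\ge\gamma_i\lambda>0$ and stays with probability $(1-\gamma_i)+\gamma_i\psi_i(b_i\mid b_i)>0$. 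Since $\psi_i$ has full support, an inconsistent player can in one step jump to an arbitrary belief. The statement then splits into two claims: (a) every $z\in\Zdagger$ is a singleton recurrent class, and (b) every inconsistent state is transient, so no other recurrent classes exist.

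For claim (a) I would argue directly: if $z=(b,\pi)\in\Zdagger$ then $\|b_i-\pi_{-i}\|_2\le\tau$ for all $i$, so $P^0_{i,zz}=1$ and hence $P^0_{zz}=\prod_{i\in I}P^0_{i,zz}=1$. Thus $z$ is absorbing, so $\{z\}$ communicates with itself and is closed, i.e.\ a recurrent communication class. Proposition~\ref{lemma: sigma, tau, M} guarantees $\Zdagger\neq\emptyset$, so such classes genuinely exist.

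Claim (b) is the crux. Fix any consistent $z^{\dagger}=(b^{\dagger},\pi^{\dagger})\in\Zdagger$ and take any inconsistent $z=(b,\pi)\notin\Zdagger$; I would build a positive-probability path of length two from $z$ to $z^{\dagger}$. \textbf{Phase 1 (unfreeze everyone).} Since $z\notin\Zdagger$, some player $i_1$ is inconsistent. Apply Assumption~\ref{assump: bad strategy} at $i_1$ with the current opponent profile $(b_{-i_1},\pi_{-i_1})$ to obtain a disruptive belief $\tilde{b}_{i_1}\in\mathcal{B}_{i_1}$ with $\|\tilde{b}_{i_1}-\pi_{-i_1}\|_2>\tau$ and $\|\tilde{\pi}_{i_1}-b_{j i_1}\|_2>\tau$ for all $j\ne i_1$, where $\tilde{\pi}_{i_1}=\Br^{\sigma}_{i_1}(\tilde{b}_{i_1})$. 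With positive probability $i_1$ resamples to $\tilde{b}_{i_1}$ while every other inconsistent player stays and the frozen players do not move, yielding a state $z^{(1)}$ in which only $i_1$'s component has changed. I then check $z^{(1)}$ is \emph{all-inconsistent}: player $i_1$ satisfies $\|\tilde{b}_{i_1}-\pi_{-i_1}\|_2>\tau$, and for each $j\ne i_1$ the $\ell_2$ decomposition
\[
\|b_j-\pi^{(1)}_{-j}\|_2^2=\sum_{l\ne j}\|b_{jl}-\pi^{(1)}_l\|_2^2\ \ge\ \|b_{j i_1}-\tilde{\pi}_{i_1}\|_2^2>\tau^2
\]
forces $\|b_j-\pi^{(1)}_{-j}\|_2>\tau$. \textbf{Phase 2 (snap to target).} From the all-inconsistent $z^{(1)}$ every player may resample, so with probability $\prod_{i}\gamma_i\psi_i(b^{\dagger}_i\mid b^{(1)}_i)>0$ all players jump to $b^{\dagger}$, producing exactly $z^{\dagger}=(b^{\dagger},\Br^{\sigma}(b^{\dagger}))\in\Zdagger$.

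Finally I would conclude transience: $z$ reaches the absorbing state $z^{\dagger}\ne z$ with positive probability in two steps, and $z^{\dagger}$ never returns, so $z$ lies in no recurrent class. Hence the recurrent communication classes of $P^0$ are exactly the singletons $\{z\}$ with $z\in\Zdagger$. The main obstacle is Phase~1 — simultaneously driving \emph{all} players (including those currently frozen at a consistent but off-target belief) into inconsistency — and the proof hinges on the fact that Assumption~\ref{assump: bad strategy} is engineered precisely to produce a single deviation that is inconsistent both for the deviator and, through the opponents' beliefs about them, for everyone else; the remaining work is the routine positivity bookkeeping and the elementary norm inequality above.
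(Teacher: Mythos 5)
Your proposal is correct and follows essentially the same route as the paper's proof: both show consistent states are absorbing via the $P^0$ dichotomy of Lemma~\ref{lemma: regular perturbation}(1), then use Assumption~\ref{assump: bad strategy} to build the same two-step positive-probability path (an inconsistent player adopts the disruptive belief $\tilde{b}_i$, rendering all players inconsistent, after which everyone resamples to the target consistent state). Your explicit $\ell_2$ decomposition verifying all-inconsistency after Phase~1 is just a spelled-out version of the paper's inequality $\|\pi^{\tilde{w}}_{-j} - b^{\tilde{w}}_{j}\|_2 \geq \|\tilde{\pi}_i - b^w_{ji}\|_2 > \tau$.
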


\begin{proof}
    We first show that every state $z$ in $\Zdagger$ is an absorbing state. For any state $z = (b , \pi) \in \Zdagger$, every player's hypothesis is consistent with the actual strategy, i.e. \(\|b_i - \pi_{-i}\|_2 \leq \tau\) for all $i \in I$. 
    From Lemma \ref{lemma: regular perturbation}, we know for all $i \in I$, when $\|b_i - \pi_{-i}\|_2 \leq \tau$, we have
    \begin{align*}
        P^0_{i, zz'} &= 
    \begin{cases}
        1, \quad &\text{ if } b_i = b'_i, \\
        0, \quad &\text{ if } b_i \neq b'_i.
    \end{cases}
    \end{align*} 
    Therefore, all players will keep their current belief and strategy, and the Markov chain remains in the same state:
    \[P^0_{zz'} = \prod_{i \in I} P^0_{i, zz'}  = \begin{cases}
        1, \quad &\text{ if } z = z', \\
        0, \quad &\text{ if } z \neq z'.
    \end{cases}\]
    Therefore, any consistent belief state $z \in \Zdagger$ is an absorbing state and $\{z\}$ is a singleton recurrent communication class of the unperturbed process $P^0$ for all $z \in \Zdagger$.

    To show that there are no other recurrent communication class in the unperturbed process, it suffices to show that for any non-consistent state $w = (b^w, \pi^w)\not \in \Zdagger$, there exists a belief consistent state $z = (b^z, \pi^z) \in \Zdagger$ that is reachable from $w$. 
    Since $w$ is not a consistent belief state, there exists at least one player $i$ whose belief is not consistent, i.e.
    \begin{align}\label{eq: inconsistent at w}
        \|b^w_i - \pi^w_{-i}\|_2 > \tau.
    \end{align} 
    From Assumption \ref{assump: bad strategy}, we know that there exists a belief $\tilde{b}_i$ that satisfies: 
    \begin{align}\label{eq: bad belief for z}
        \|\tilde{b}_i - \pi^w_{-i}\|_2 > \tau,
    \end{align}
    and the associated smooth best respond strategy $\tilde{\pi}_i = \Br^{\sigma}_i(\tilde{b}_i)$ is not consistent with any other player's belief in $w$:
    \begin{align}\label{eq: bad belief for w}
        \|\tilde{\pi}_i - b^w_{ji}\|_2 > \tau, \quad \forall j \in I\setminus \{i\}.
    \end{align} 
We consider the state $\tilde{w} = (\tilde{b}_i, b^w_{-i},\tilde{\pi}_i, \pi^w_{-i})$. We show that the state transition path $w \to \tilde{w}  \to z$ has positive probability in the unperturbed process.
    
    \paragraph{Transition step $1$: $w \to \tilde{w}$.} Player $i$ updates their belief from $b_i^w$ to $\tilde{b}_i$ with probability $P^0_{i,w\tilde{w}}$, and all other players $j \neq i$ do not update. From \eqref{eq: P0} and \eqref{eq: inconsistent at w}, $P^0_{i,w\tilde{w}}$ is given by:
\begin{align*}
        P^0_{i,w\tilde{w}} &= \begin{cases}
            (1-\pc) + \pc \cdot \phi_i(\tilde{b}_i|b_i^w), \quad &\text{ if } b^w_i = \tilde{b}_i,\\
            \pc \cdot \phi_i(\tilde{b}_i|b_i^w), \quad &\text{ if } b^w_i \neq \tilde{b}_i,
        \end{cases}\\
        &\geq \pc \cdot \lambda.
    \end{align*}
    For any $j \neq i$, player $j$'s belief $b^w_j$ does not change when the state transits from $w$ to $\tilde{w}$:
 \begin{align*}
        P^0_{j,w\tilde{w}} &= \begin{cases}
            1, \quad &\text{ if } \|b^w_j - \pi^w_{-j}\|_2 \leq \tau,\\
            (1-\gamma_j) + \gamma_j \cdot \psi_j(b_j^w|b_j^w), \quad &\text{ if } \|b^w_j - \pi^w_{-j}\|_2 > \tau 
        \end{cases}\\
        & \geq (1-\gamma_j) + \gamma_j \cdot \lambda.
    \end{align*}
    Hence, the probability of state transits from $w$ to $\tilde{w}$ is positive:
 \begin{align*}
        P^0_{w\tilde{w}} &= P^0_{i,w\tilde{w}}\cdot \prod_{j \neq i} P^0_{j,w \tilde{w}}\geq (\pc \cdot \lambda)\prod_{j \neq i} ((1-\gamma_j) + \gamma_j \lambda)) > 0.
\end{align*}\paragraph{Transition step $2$: $\tilde{w} \to z$.} Player $i$ updates their belief from $\tilde{b}_i$ to $b^z_i$, and every other player $j \neq i$ updates their belief from $b^w_j$ to $b^z_j$.
    Since $\|\pi^{\tilde{w}}_{-j} - b^{\tilde{w}}_{j}\|_2 \geq \|\tilde{\pi}_i - b^w_{ji}\|_2 \stackrel{\eqref{eq: bad belief for w}}{>} \tau$, from \eqref{eq: P0}, $P^0_{j,\tilde{w}z}$ satisfies
\begin{align*}
        P^0_{j,\tilde{w}z} &= \begin{cases}
            (1-\gamma_j) + \gamma_j \cdot \phi_j(b_j^z | b_j^w), \quad &\text{ if } b^w_j = b^z_j,\\
            \gamma_j \cdot \phi_j(b_j^z | b_j^w),\quad &\text{ if } b^w_j \neq b^z_j,
        \end{cases}\\
        &\geq \gamma_j \cdot \lambda.
    \end{align*}
    From \eqref{eq: bad belief for z}, we have $\|\tilde{b}_i - \pi^w_{-i}\|_2 > \tau$. From \eqref{eq: P0}, 
 \begin{align*}
        P^0_{i,\tilde{w}z} &= \begin{cases}
    (1-\pc) + \pc \cdot \psi_i(b_i^z|\tilde{b}_i), \quad &\text{ if }  \tilde{b}_i = b^z_i, \\
    \pc  \cdot \psi_i(b_i^z|\tilde{b}_i), \quad &\text{ if } \tilde{b}_i \neq b^z_i.
    \end{cases}\\
        & \geq  \pc \cdot \lambda.
    \end{align*}
    Hence, transition from $\tilde{w}$ to $\tilde{z}$ happens with positive probability:
\begin{align*}
        P^0_{\tilde{w}\tilde{z}} &= P^0_{i,\tilde{w}\tilde{z}}\cdot \prod_{j \neq i} P^0_{j,\tilde{w}\tilde{z}} \geq \prod_{i \in I}(\pc \cdot \lambda) > 0.
    \end{align*}
Summarizing, the probability of the state transitioning from $w$ to $z$ in two epoches is positive: 
    \begin{align*}
        \Pr(z^2 = z \mid z_0 = w) \geq P^0_{w\tilde{w}} \cdot  P^0_{\tilde{w}z}  > 0. 
    \end{align*}
    Hence, any non belief consistent state $w \not \in \Zdagger$ cannot be in a recurrent communication class of the unperturbed Markov chain.
\end{proof}


\subsection{Tree Resistance and Stochastic Potential}
Having established in Steps 1 and 2 that the learning dynamics form a regular perturbation of an unperturbed Markov chain and that the recurrent classes of the unperturbed chain correspond to consistent states, we now characterize which of these classes comprise the stochastically stable set. While all consistent states are absorbing in the unperturbed dynamics, vanishing but nonzero perturbations induce transitions between them with small probabilities. To compare the relative stability of these recurrent classes under perturbation, we adopt the notion of resistance introduced in \cite{Young1993}, which captures the leading-order exponent of the transition probability as a function of the perturbation magnitude. Formally, we present the definitions of edge and path resistance, $z$-tree and stochastic potential as follows: 

\begin{definition}[Edge resistance, Path Resistance, Resistence Tree, and Stochastic Potential {\citep{Young1993}}]\label{def:resistance}
Consider a perturbed Markov process on a finite state space \(\mathcal{Z}\), with transition matrix \(P^\xi\). Let $G$ be a fully connected directed graph with states $\mathcal{Z}$ being the node set. 

\begin{itemize}
    \item \textbf{Edge Resistance.} For any \(z, z' \in \mathcal{Z}\), if \(P^{\xi}_{zz'} > 0\), the resistance of the edge \((z \to z')\) is the unique number \(r_{zz'} \geq 0\) such that
    \[
    0 < \lim_{\xi \to 0} P^{\xi}_{zz'} \cdot \xi^{-r_{zz'}} < \infty.
    \]
    If \(P^{\xi}_{zz'} = 0\), we define \(r_{zz'} = \infty\).
    \item \textbf{Path Resistance.} For a transition path \(\rho_{z_1 \to z_n} = z_1 \to z_2 \to \cdots \to z_n\), the resistance of the path is the sum of its edge resistances:
    \[
    R(\rho_{z_1 \to z_n}) = \sum_{j=1}^{n-1} r_{z_j z_{j+1}}.
    \]
    \item \textbf{Minimum Path Resistance.} For any \(z, z' \in \mathcal{Z}\), let $\hat{r}_{wz}$ be the smallest path resistance over all paths that start in $w$ and end in $z$:
    \[\hat{r}_{wz} = \min_{\rho_{w \to z}} R(\rho_{w \to z}).\]
\end{itemize}
Let $\mathcal{G}$ be a fully connected directed graph with each node being a recurrent class of the unperturbed Markov chain. The edge weight between any two recurrent classes (nodes) $C_i, C_j \in \mathcal{C}$ is defined as the minimum path resistance $\hat{r}_{ij}$ across all paths that begin in $C_i$ and end in $C_j$ in the original graph $G$. The following concepts are defined on the graph $\mathcal{G}$: 

\begin{itemize}
    \item \textbf{\(j\)-Tree.} For any node \(C_j \in \mathcal{C}\), a \(j\)-tree $\Gamma$ is a spanning directed tree over \(\mathcal{C}\) such that for every node \(C_i \neq C_j\), there is a unique directed path from \(C_i\) to \(C_j\). The weight of a \(j\)-tree is the sum of the weight of its constituent edges.

    \item \textbf{Stochastic Potential.} The stochastic potential \(\phi(j)\) of a recurrent class \( C_j \in \mathcal{C}\) is defined as the minimum weight over all possible \(j\)-trees:
    \[
    \phi(j) = \min_{\Gamma \text{ is a } j\text{-tree}} \sum_{(C_j \to C_{j'}) \in \Gamma} \rhat_{j j'}.
    \]
\end{itemize}
\end{definition}

In Definition~\ref{def:resistance}, the edge resistance $r_{zz'}$ characterizes the leading-order exponent of the one-step transition probability $P^\xi_{zz'}$ as the perturbation parameter $\xi \to 0$, i.e., $P^\xi_{zz'} = \Theta(\xi^{r_{zz'}})$. Transitions with higher resistance become exponentially less likely as $\xi$ decreases. The resistance of a path is defined as the sum of the resistances along its constituent edges. The minimum path resistance between two states $z$ and $z'$ is the smallest total resistance among all finite-length paths from $z$ to $z'$. This quantity determines the leading-order exponent of the total probability that the process transitions from $z$ to $z'$ over a finite number of epochs, i.e., $\mathbb{P}(z \to z') = \Theta(\xi^{\hat{r}_{zz'}})$. In particular, the minimum path resistance directly reflects how fast the probability of reaching $z'$ from $z$ decays as the perturbation vanishes. Edge resistance, path resistance, and minimum path resistance are all defined on the graph $G$, whose nodes are the states $z \in \mathcal{Z}$. These quantities characterize the exponential rate at which the probability of transitioning between states decays as the perturbation parameter $\xi \to 0$, either in a single epoch (edge resistance) or over any finite sequence of epochs (path and minimum path resistance).


On the other hand, the graph $\mathcal{G}$ can be viewed as a reduction of $G$, with nodes representing recurrent classes of the unperturbed Markov chain ($\xi = 0$). The probability of state transition from one recurrent class to another is zero in the unperturbed chain, but becomes positive when $\xi > 0$. The edge weight $\hat{r}_{ij}$, defined as the minimum path resistance from a state in $C_i$ to a state in $C_j$, characterizes the leading-order exponent of the transition probability from $C_i$ to $C_j$, i.e., $\mathbb{P}(C_i \to C_j) = \Theta(\xi^{\hat{r}_{ij}})$. An edge with higher weight implies a faster exponential decay of this probability as $\xi \to 0$. The notion of a $j$-tree captures how recurrent class $C_j$ can be reached from all other recurrent classes through a directed spanning tree. 
The stochastic potential $\phi(j)$, defined as the weight of the minimum $j$-tree, quantifies the leading-order exponent of the maximum probability that the system transitions from all other recurrent classes into $C_j$. 
Indeed, the following lemma from \citet{Young1993} formalizes this characterization: the stochastically stable states are those contained in recurrent classes with the minimum stochastic potential.

\begin{lemma}[\citet{Young1993}] \label{theorem: Young 1993}
Let $P^0$ be a stationary Markov process on the finite state space $\mathcal{Z}$ with recurrent communication classes $C_1, \ldots, C_J$. Let $P^\xi$ be a regular perturbation of $P^0$.
Then, a state $z$ is stochastically stable if and only if $z$ is contained in a recurrent communication class $C_j$ that minimizes its stochastic potential.
\end{lemma}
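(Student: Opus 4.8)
The plan is to prove Young's characterization in three movements: an exact formula for the stationary distribution via the Markov chain tree theorem, an asymptotic analysis of that formula using the resistance exponents supplied by the regular-perturbation hypothesis, and finally a graph-reduction argument that converts minimum-resistance spanning trees on the full state graph $G$ into the stochastic potential on the reduced graph $\mathcal{G}$ of recurrent classes.

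First, I fix $\xi \in (0,\bar\xi)$. By condition (i) of regular perturbation, $P^\xi$ is irreducible and aperiodic, so it admits a unique stationary distribution $\mu^\xi$. By the Markov chain tree theorem (Freidlin--Wentzell), this stationary mass can be written explicitly as
\[
\mu^\xi_z = \frac{q_z(\xi)}{\sum_{z' \in \mathcal{Z}} q_{z'}(\xi)}, \qquad q_z(\xi) = \sum_{T \in \mathcal{T}_z} \prod_{(w \to w') \in T} P^\xi_{ww'},
\]
where $\mathcal{T}_z$ denotes the set of directed spanning trees on $G$ rooted at $z$ (every non-root node has a unique outgoing edge and every directed path leads to $z$). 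This reduces the question to estimating products of edge probabilities.

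Second, I translate the tree formula into resistance exponents. By condition (iii), every edge with $P^\xi_{ww'} > 0$ satisfies $P^\xi_{ww'} = c_{ww'}\,\xi^{r_{ww'}}(1 + o(1))$ for some $c_{ww'} \in (0,\infty)$, while edges with $P^\xi_{ww'} = 0$ kill their tree's product (equivalently $r_{ww'} = \infty$). Hence each surviving tree product is $\Theta(\xi^{R(T)})$ with $R(T) = \sum_{(w\to w')\in T} r_{ww'}$, and because all summands are positive there is no cancellation, so
\[
q_z(\xi) = \Theta\!\left(\xi^{\rho(z)}\right), \qquad \rho(z) := \min_{T \in \mathcal{T}_z} R(T).
\]
Writing $\rho^* := \min_{z' \in \mathcal{Z}} \rho(z')$, I obtain $\mu^\xi_z = \Theta(\xi^{\rho(z) - \rho^*})$, so $\lim_{\xi \to 0}\mu^\xi_z > 0$ if and only if $\rho(z) = \rho^*$. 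Thus the stochastically stable states are exactly the minimizers of the full-graph minimum tree resistance $\rho(\cdot)$, and it remains to identify these minimizers with the recurrent classes minimizing $\phi(\cdot)$.

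Third, I reduce $\rho(\cdot)$ on $G$ to $\phi(\cdot)$ on $\mathcal{G}$, using the key structural fact that every transition taken by $P^0$ has zero resistance. Consequently states within a recurrent class are joined by zero-resistance edges, and every transient state drains into some recurrent class along zero-resistance edges. I claim $\rho(z) = \phi(j)$ whenever $z \in C_j$. For the upper bound I expand a minimum-weight $j$-tree on $\mathcal{G}$ into a $z$-rooted spanning tree on $G$ by wiring each class internally to its exit state and attaching all transient states at zero cost, realizing resistance exactly $\phi(j)$. For the lower bound I contract any $z$-tree on $G$ by collapsing each recurrent class to a node, obtaining a $j$-tree on $\mathcal{G}$ whose weight lower-bounds $R(T)$ since inter-class branch resistances are at least the minimum path resistances $\rhat_{ij}$. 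Finally, a transient $z$ forces any $z$-tree to contain a positive-resistance edge leaving a \emph{closed} recurrent class toward $z$, so $\rho(z) > \min_j \phi(j) = \rho^*$; hence transient states are never stable, and the minimizers of $\rho$ are precisely the states in classes minimizing $\phi(j)$. The hard part will be the tree expansion/contraction bookkeeping in this last step: I must verify that collapsing recurrent classes sends $z$-rooted trees on $G$ to $j$-rooted trees on $\mathcal{G}$ without creating cycles or double-counting, that the zero-resistance internal and transient edges can always complete a full tree of resistance exactly $\phi(j)$, and that the inter-class weight $\rhat_{ij}$ (a minimum path resistance) can be realized as a simple directed branch so that path resistances on $G$ aggregate correctly into tree weights on $\mathcal{G}$. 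The asymptotic step is routine given positivity (no cancellation) and the tree theorem is standard; the combinatorial equivalence $\rho(z) = \phi(j(z))$ for recurrent $z$ is where the genuine care lies.
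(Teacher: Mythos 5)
You are proving a statement the paper does not prove at all: Lemma~\ref{theorem: Young 1993} is imported verbatim from \citet{Young1993} and used as a black box, so there is no in-paper proof to compare against. Your plan reconstructs what is essentially Young's original argument (which itself rests on the Freidlin--Wentzell tree machinery): the Markov chain tree theorem gives $\mu^\xi_z$ as a ratio of rooted spanning-tree sums, condition (iii) of regular perturbation turns each tree product into $\Theta(\xi^{R(T)})$ with no cancellation since all terms are positive, and stochastic stability reduces to minimizing the full-graph tree resistance $\rho(z)$, which one then identifies with the stochastic potential $\phi(j)$ on the reduced graph. All three movements are sound, and you have correctly located the crux where you flag it: the expansion/contraction bookkeeping. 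Two remarks on that crux. For the upper bound $\rho(z)\le\phi(j)$, your phrase ``wiring each class internally to its exit state'' is too optimistic as stated, because a path realizing $\hat{r}_{ik}$ may pass through a third recurrent class or share edges with another realizing path; the standard repair is to take the \emph{union} of the realizing paths together with the zero-resistance intra-class and transient-drainage edges, observe that every node reaches $z$ within this union, and extract any spanning in-tree, whose resistance is at most the union's total weight $\phi(j)$. For the lower bound, the contraction map (class $\mapsto$ first other class met along the tree path to the root) can double-count shared tail segments when paths from distinct classes merge at a transient node, so the naive inequality $\sum_i \hat{r}_{i,\mathrm{succ}(i)} \le R(T)$ needs the more careful segment accounting that Young carries out in his appendix. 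Finally, for excluding transient states you do not need tree combinatorics at all: by condition (ii), every limit point of $\mu^\xi$ as $\xi\to 0$ is a stationary distribution of $P^0$, and such distributions put zero mass outside the recurrent classes --- this is cleaner than arguing that every $z$-tree rooted at a transient state carries a strictly extra positive-resistance exit edge, though that argument can also be made to work since every edge leaving a closed class has strictly positive resistance. In short: your route is correct and is the classical one; it supplies a proof the paper deliberately outsources, and completing it honestly requires exactly the tree-surgery care you identified.
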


From Lemma \ref{lemma:recurrent_class}, we know that the set of all recurrent classes of our unperturbed Markov chain is the set of consistent states, i.e. $C= \{\{z\}\}_{z \in \mathcal{Z}^{\dagger}}$. 
To complete the proof of Theorem~\ref{theorem:main}, it remains to identify which of the recurrent class (i.e. consistent state) is associated with the minimum stochastic potential. 
The following lemma characterizes the edge and minimum path resistances between any pair of consistent states, which will be used for computing the weight of any $z$-tree.

\begin{figure}[ht]
\centering
\begin{subfigure}{0.45\textwidth}
\centering
\begin{tikzpicture}[
    inner/.style={circle, draw, fill=blue!20, minimum size=4mm,inner sep=0pt,font=\bfseries},
    outer/.style={circle, draw, minimum size=4mm,inner sep=0pt}
  ]
\tikzset{>={Latex[width=1mm,length=1mm]}}

\begin{scope}[shift={(0,2.8cm)}]
  \foreach \i in {1,...,4} {
    \node[inner] (I\i) at ({90+(\i-1)*360/4}:0.7cm) {\i};
  }
\end{scope}

  \foreach \i in {1,...,10} {
  \pgfmathtruncatemacro{\labelval}{\i + 4}
  \node[outer] (O\i) at ({90+(\i-1)*360/10}:1.3cm) {\labelval};
}

  \foreach \i in {1,...,10} {
    \foreach \j in {1,...,10} {
      \ifnum\i<\j
        \draw[gray!70,<->] (O\i) -- (O\j);
      \fi
    }
  }

  \foreach \i in {1,...,10} {
    \foreach \j in {1,...,4} {
      \draw[gray!70,->] (O\i) -- (I\j);
    }
  }
\foreach \i in {1,...,4} {
    \foreach \j in {1,...,4} {
    \ifnum\i<\j
      \draw[gray!70,->] (I\i) -- (I\j);
    \fi
    }
  }

\end{tikzpicture}
\caption{Graph $G$}
\end{subfigure}
\hfill
\begin{subfigure}{0.45\textwidth}
\centering
\begin{tikzpicture}[
    inner/.style={circle, draw, fill=blue!20, minimum size=5mm,inner sep=0pt},
    outer/.style={circle, draw, minimum size=4mm,inner sep=0pt}
  ]
  \tikzset{>={Latex[width=2mm,length=2mm]}}
  \foreach \i in {1,...,4} {
    \node[inner] (I\i) at ({90+(\i-1)*360/4}:2cm) {\i};
  }

  \foreach \i in {1,...,4} {
    \foreach \j in {1,...,4} {
      \ifnum\i<\j
        \draw[<->] (I\i) -- (I\j);
      \fi
    }
  }

\end{tikzpicture}
\caption{Graph $\mathcal{G}$}
\end{subfigure}
\caption{In, graph $G$, the blue nodes represent consistent states (i.e. recurrent classes of the unperturbed Markov chain), and white nodes represent inconsistent states. Graph $\mathcal{G}$ is the reduced graph where each node is a consistent state, and each edge weight is the minimum path resistance of the connecting states in $G$.}
\end{figure}
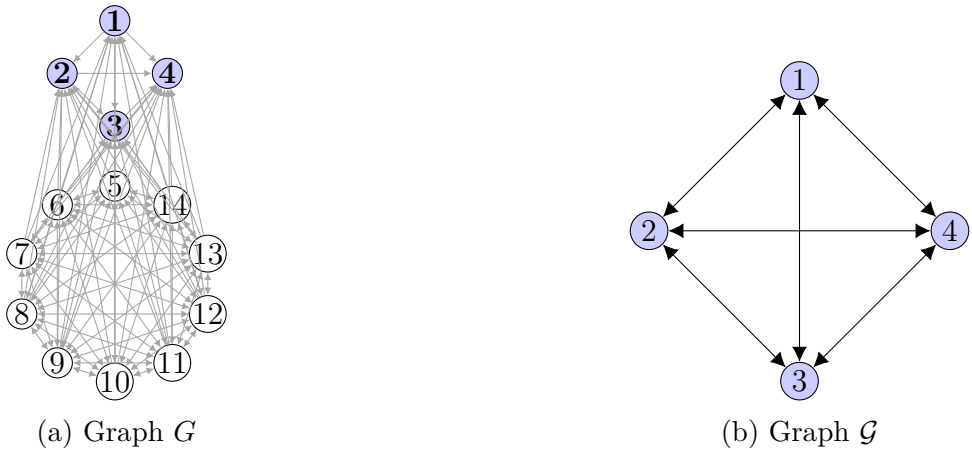



\begin{lemma}[Minimum resistance between consistent states] \label{lemma: out edge resistance}
Let \( z = (b,\pi),\, z' = (b',\pi') \in \mathcal{Z} \). The edge resistance between \( z \) and \( z' \) in $G$ is

\begin{align}\label{eq:edge_resistence}
r_{zz'} \;=\; \sum_{i\in I} r_{i,zz'},\qquad 
r_{i,zz'} \;=\;
\begin{cases}
f_i\bigl(U_i(\pi_i, b_i)\bigr), & \text{if } \|b_i - \pi_{-i} \|_2 \le \tau \text{ and } b_i \ne b'_i, \\[4pt]
0, & \text{otherwise}.
\end{cases}
\end{align}
    For any $z = (b^z,\pi^z) \in Z^\dagger$ and any $w \in Z^\dagger$, the edge weight $\hat{r}_{zw}$ of graph $\mathcal{G}$ is 
    \begin{align}\label{eq:path_resistence}
        \hat{r}_{zw} = \min_{i \in I} f_i(\U{i}{\pi^z_i, b^z_i}).
    \end{align}

\end{lemma}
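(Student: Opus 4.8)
The plan is to read off the edge-resistance formula \eqref{eq:edge_resistence} directly from the regular-perturbation analysis already completed, and then to establish the reduced edge weight \eqref{eq:path_resistence} by matching an explicit two-step path against a first-transition lower bound.

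For the edge resistance in $G$ there is essentially nothing new to prove: the per-player summand $r_{i,zz'}$ in \eqref{eq:edge_resistence} is exactly the decomposition of $r_{zz'}$ furnished by Lemma \ref{lemma: regular perturbation}(2). A player $i$ contributes $f_i(U_i(\pi_i,b_i))$ precisely when they are consistent at $z$ (so $\|b_i-\pi_{-i}\|_2\le\tau$) yet change belief (so $b_i\ne b_i'$), since the only mechanism for a belief update in that regime is exploration, whose probability scales as $\xi^{f_i(U_i(\pi_i,b_i))}$; in every other case the one-step probability has a positive limit and the resistance is $0$. Thus the first display follows by invoking Lemma \ref{lemma: regular perturbation}(2) and matching the summand to the definition of edge resistance.

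The substance lies in \eqref{eq:path_resistence}, which I would prove by two matching inequalities for a fixed consistent source $z\in\Zdagger$ and a distinct consistent target $w\in\Zdagger$. For the lower bound, I would take any path from $z$ to $w$, discard the leading self-loops at $z$ (which carry zero resistance), and examine the first genuine transition $z\to z_1$ with $z_1\ne z$. Because $z$ is consistent, \emph{every} player is within tolerance at $z$, so any player who changes belief on this step contributes their full term $f_i(U_i(\pi^z_i,b^z_i))$ by \eqref{eq:edge_resistence}; since at least one player must change belief to leave $z$ and all edge resistances are nonnegative, the whole path resists at least $\min_{i\in I} f_i(U_i(\pi^z_i,b^z_i))$. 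For the matching upper bound I would exhibit a length-two path attaining this value. Let $i^\ast$ attain the minimum. Applying Assumption \ref{assump: bad strategy} to player $i^\ast$ against the opponent profile $(b^z_{-i^\ast},\pi^z_{-i^\ast})$ yields a belief $\tilde b_{i^\ast}$ (necessarily distinct from $b^z_{i^\ast}$, as $\|\tilde b_{i^\ast}-\pi^z_{-i^\ast}\|_2>\tau\ge\|b^z_{i^\ast}-\pi^z_{-i^\ast}\|_2$) whose smooth best response $\tilde\pi_{i^\ast}=\Br^\sigma_{i^\ast}(\tilde b_{i^\ast})$ is $\tau$-far from every opponent belief $b^z_{ji^\ast}$. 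In the first step only $i^\ast$ explores to $\tilde b_{i^\ast}$ while everyone else stays put, so this edge has resistance exactly $f_{i^\ast}(U_{i^\ast}(\pi^z_{i^\ast},b^z_{i^\ast}))=\min_{i\in I} f_i(U_i(\pi^z_i,b^z_i))$ and the stationary players contribute nothing.

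The main obstacle, and the point I would verify most carefully, is that at the intermediate state $\tilde w$ obtained after this single deviation \emph{every} player is inconsistent: player $i^\ast$ because $\|\tilde b_{i^\ast}-\pi^z_{-i^\ast}\|_2>\tau$, and each opponent $j$ because the $i^\ast$-block of their belief satisfies $\|b^z_{ji^\ast}-\tilde\pi_{i^\ast}\|_2>\tau$, which lower-bounds the full discrepancy $\|b^z_j-\pi^{\tilde w}_{-j}\|_2$. This is exactly what the two clauses of Assumption \ref{assump: bad strategy} are engineered to guarantee. Once all players are inconsistent, the unperturbed transition rule of Lemma \ref{lemma: regular perturbation}(1) lets every player reject and resample with positive probability, so they can jump simultaneously to their $w$-beliefs in one step with zero resistance. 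The path $z\to\tilde w\to w$ then has total resistance $\min_{i\in I} f_i(U_i(\pi^z_i,b^z_i))$, which together with the lower bound yields \eqref{eq:path_resistence}; intuitively, the leading-order cost of escaping a consistent state is borne entirely by the least-satisfied player, after which the rest of the journey is free.
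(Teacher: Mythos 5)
Your proposal is correct and follows essentially the same route as the paper's proof: the edge-resistance formula is read off from Lemma \ref{lemma: regular perturbation}(2), the lower bound comes from the first transition out of the consistent state $z$, and the upper bound from the two-step path through the intermediate state supplied by Assumption \ref{assump: bad strategy}, in which every player is inconsistent so the second step is free. Your explicit observation that $\tilde b_{i^\ast}\ne b^z_{i^\ast}$ (since $\|\tilde b_{i^\ast}-\pi^z_{-i^\ast}\|_2>\tau\ge\|b^z_{i^\ast}-\pi^z_{-i^\ast}\|_2$) is a small detail the paper leaves implicit, but it does not change the argument.
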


\medskip 
\noindent\emph{Proof of Lemma \ref{lemma: out edge resistance}.} 
From part $(2)$ of Lemma \ref{lemma: regular perturbation}, for any $z = (b,\pi), z' \in \mathcal{Z}$, we have that 
\[
0 < \lim_{\xi \to 0} P^{\xi}_{zz'}\cdot \xi^{-r_{zz'}} < \infty,
\]
for
\begin{align*}
r_{zz'} = \sum_{\substack{i \in \{i \in I ~|~ b_i \neq b'_i,\\ \|b_i - \pi_{-i}\|_2 \leq \tau\}}} f_i(\U{i}{\pi_i,b_i}).
\end{align*}
Therefore, the edge resistance between \( z \) and \( z' \) in graph $G$ satisfies
\begin{align*}
r_{zz'} \;=\; \sum_{i\in I} r_{i,zz'},\qquad 
r_{i,zz'} \;=\;
\begin{cases}
f_i\bigl(U_i(\pi_i, b_i)\bigr), & \text{if } \|b_i - \pi_{-i} \|_2 \le \tau \text{ and } b_i \ne b'_i, \\[4pt]
0, & \text{otherwise}.
\end{cases}
\end{align*}

We next prove that the edge weight between any two consistent state in $\mathcal{G}$ is given by \eqref{eq:path_resistence}. Consider any belief consistent states $z = (b^z, \pi^z), w = (b^w, \pi^w) \in \Zdagger$. We first prove that $\hat{r}_{zw} \leq \min_{i \in I} f_i(\U{i}{\pi_i, b_i})$. Recall that the edge weight $\hat{r}_{zw}$ is defined as the minimum resistance over all possible paths between $z$ and $w$. It suffices to present a transition path from $z$ to $w$ with resistance 
    $\min_{i \in I} f_i(\U{i}{\pi^z_i, b^z_i})$. 
    Let player $i^{\dagger}$ be a player with the smallest $f_i(\U{i}{\pi^z_i, b^z_i})$ among all players. That is, \[f_{i^{\dagger}}(\U{i^{\dagger}}{\pi^z_{i^{\dagger}}, b^z_{i^{\dagger}}}) = \min_{j \in I} f(\U{j}{\pi^z_j, b^z_j}).\]
    From Assumption \ref{assump: bad strategy}, there exists a belief $\tilde{b}_{i^{\dagger}} \in \mathcal{B}_{i^{\dagger}}$ for player $i^{\dagger}$ that is at least $\tau$ distance away from strategy $\pi^z_{-i^{\dagger}}$:
    \begin{align*}
        \|\tilde{b}_{i^{\dagger}} - \pi^z_{-i^{\dagger}}\|_2 > \tau,
    \end{align*}
    and its $\sigma$-smooth strategy $\tilde{\pi}_{i^{\dagger}} = \Br_{i^{\dagger}}^{\sigma}(\tilde{b}_{i^{\dagger}})$ is far from all other players' beliefs in $b^z$:
    \[\|\tilde{\pi}_{i^{\dagger}} - b^z_{ji^{\dagger}}\|_2 > \tau, \quad \forall j \in I\setminus \{i^{\dagger}\}.\]
We define $\tilde{z}=(\tilde{b}_{i^{\dagger}}, b^z_{-i^{\dagger}}, \tilde{\pi}_{i^{\dagger}} ,\pi^z_{-i^{\dagger}})$. We compute the resistance of state transition path: $z \to \tilde{z} \to w$.

\paragraph{Transition step $z \to \tilde{z}$.} Player $i^{\dagger}$ updates their belief to $\tilde{b}_{i^{\dagger}}$, and all other players $j \neq i^{\dagger}$ keep their belief $b^z_j$. Since $z \in \Zdagger$ is a belief consistent state, we have $\|b^z_j - \pi^z_{-j}\|_2 \leq \tau$ for all $j \in I$. From \eqref{eq:edge_resistence}, we have $r_{i^{\dagger}, z\tilde{z}} = f_{i^{\dagger}}(\U{i^{\dagger}}{\pi_{i^{\dagger}},b_{i^{\dagger}}})$ and $r_{j, z\tilde{z}} = 0$ for all $j \neq i^{\dagger}$, and the edge resistance $r_{z\tilde{z}}$ between $z$ and $\tilde{z}$ is 
\[r_{z\tilde{z}} = \sum_{j \in I}r_{j, z\tilde{z}} = f_{i^{\dagger}}(\U{i^{\dagger}}{\pi^z_{i^{\dagger}},b^z_{i^{\dagger}}}).\] 

\paragraph{Transition step $\tilde{z} \to w$.} Recall that $\tilde{z}=(\tilde{b}_{i^{\dagger}}, b^z_{-i^{\dagger}}, \tilde{\pi}_{i^{\dagger}} = \Br^{\sigma}_{i^{\dagger}}(\tilde{b}_{i^{\dagger}}) ,\pi^z_{-i^{\dagger}})$, and $\|\tilde{b}_{i^{\dagger}} - \pi^z_{-i^{\dagger}}\|_2 > \tau$. Hence from \eqref{eq:edge_resistence}, we have \[r_{i^{\dagger}, \tilde{z}w} = 0.\]
Since for all $j \neq i^{\dagger}$, we have $\|\tilde{\pi}_{i^{\dagger}} - b^z_{ji^{\dagger}}\|_2 > \tau$. Then,
\[\|\pi^{\tilde{z}}_{-j} - b^{\tilde{z}}_{j}\|_2 \geq\|\tilde{\pi}_{i^{\dagger}} - b^z_{ji^{\dagger}}\|_2 > \tau.\]
From \eqref{eq:edge_resistence}, we have \[r_{j, \tilde{z}w} = 0, \quad \forall j \in I\setminus \{i^{\dagger}\}.\] Hence, the edge resistance $r_{\tilde{z}w}$ between $\tilde{z}$ and $w$ is 
\[r_{\tilde{z}w} = \sum_{j \in I}r_{j, \tilde{z}w} = 0.\] 
The path resistance of path $\rho_{z \to w} = z \to \tilde{z} \to w$ is
\[R(\rho_{z \to w}) = r_{z\tilde{z}} + r_{\tilde{z}w} = f_{i^{\dagger}}(\U{i^{\dagger}}{\pi^z_{i^{\dagger}},b^z_{i^{\dagger}}}).\]
Since the edge weight $\hat{r}_{zw}$ is defined as the smallest resistance over all possible paths between $z$ and $w$, we have 
\begin{align}\label{eq:one_direction}\hat{r}_{zw} \leq R(\rho_{z \to w}) = f_{i^{\dagger}}(\U{i^{\dagger}}{\pi^z_{i^{\dagger}},b^z_{i^{\dagger}}} = \min_{j \in I} f(\U{j}{\pi^z_i, b^z_i}).\end{align}

Next, we show that the edge weight $\hat{r}_{zw}$ is at least $\min_{i \in I} f_i(\U{i}{\pi^z_i, b^z_i})$. We argue that any transition path that starts at $z = (b^z, \pi^z) \in \Zdagger$ and ends at state $w \in \Zdagger$ has resistance larger than or equal to $\min_{i \in I} f_i(\U{i}{\pi^z_i, b^z_i})$. Consider any transition path from $z$ to $w$. Since $z \neq w$, there must exists $z' = (b', \pi') \neq z$ such that $z \to z'$ is on the path. 
Since $z' \neq z$, we have $b^z \neq b'$. Hence, there must exist at least one player $i \in I$ who changes their belief, i.e. $b^z_i \neq b'_i$. Since $\|b^z_i - \pi^z_{-i}\|_2 \leq \tau$, from \eqref{eq:edge_resistence}, we have \[r_{i, zz'} = f_i(\U{i}{\pi^z_i,b^z_i}).\]
Since $r_{j, zz'} \geq 0$ for all $j \neq i$, the edge resistance $r_{zz'}$ between $z$ and $z'$ satisfies 
\[r_{zz'} = \sum_{j \in I}r_{j, zz'} \geq r_{i, zz'} = f_i(\U{i}{\pi^z_i,b^z_i}).\] 
Since the resistance of any edge is nonnegative, for any path $\rho_{z \to w} = z \to z' \to \cdots \to w$ where $z' = (b', \pi')$ is the first state visited on the path other than $z$, the path resistance of $\rho_{z \to w}$ must satisfy:
\[R(\rho_{z \to w}) \geq r_{zz'} = f_i(\U{i}{\pi^z_i,b^z_i}), \quad \forall i \ \text{ such that } b^z_i \neq b'_i.\]
Since the edge weight $\hat{r}_{zw}$ is defined as the smallest resistance over all $\rho_{z \to w} \in \mathcal{P}_{z \to w}$, it satisfies:
\begin{align}\label{eq:the_other_direction}
\hat{r}_{zw} = \min_{\rho_{z \to w} \in \mathcal{P}_{z \to w}} R(\rho_{z \to w}) \geq \min_{i \in I} f_i(\U{i}{\pi^z_i,b^z_i}).\end{align}
Combing \eqref{eq:one_direction} and \eqref{eq:the_other_direction}, we have \[\hat{r}_{zw} = \min_{i \in I} f_i(\U{i}{\pi^z_i, b^z_i}), \quad \forall z = (b^z,\pi^z), w \in \Zdagger.\] 
\hfill $\square$

With the edge weight between consistent states characterized in Lemma~\ref{lemma: out edge resistance}, we are now ready to complete the proof of Theorem~\ref{theorem:main}. Specifically, we show that the consistent states that maximize the minimum transformed utility across all players also minimize stochastic potential, and thus constitute the stochastically stable set.

\begin{proof}[Proof of Theorem \ref{theorem:main}]
  Consider any $z \in \Zdagger$ and the associated $z$-tree. For any $w\neq z \in \Zdagger$, there exists exactly one edge that starts from $w$ in the $z$-tree. That is, there exists exactly one $w' \in \Zdagger$ such that $(w, w')$ is in the $z$-tree. From Lemma \ref{lemma: out edge resistance}, for any $w = (b^w, \pi^w) \neq z \in \Zdagger$, the edge weight is \[\hat{r}_{ww'} = \min_{i \in I} f(\U{i}{\pi_i^w, b_i^w}).\] 
    Notice that the edge weight $\hat{r}_{ww'}$ only depends on $w$. 
   Any $z$-tree must have the same total weight given by 
    \begin{align*}
        \sum_{\substack{w = (b^w, \pi^w) \\ w \in \Zdagger \setminus \{z\}}} \min_{i \in I} f(\U{i}{\pi_i^w, b_i^w}) = \sum_{w = (b^w,\pi^w) \in \Zdagger} \min_{i \in I} f(\U{i}{\pi_i^w, b_i^w}) - \min_{i \in I} f(\U{i}{\pi_i^z, b_i^z}).
    \end{align*} 
    Therefore, the stochastic potential of $z \in \Zdagger$ is \[\phi(z) =  \sum_{\substack{w = (b^w, \pi^w) \\ w \in \Zdagger \setminus \{z\}}}  \min_{i \in I} f_i(\U{i}{\pi_i^w, b_i^w}) = \underbrace{\sum_{w = (b^w,\pi^w) \in \Zdagger} \min_{i \in I} f(\U{i}{\pi_i^w, b_i^w})}_{\text{constant}} - \min_{i \in I} f(\U{i}{\pi_i^z, b_i^z}).\] 
From Lemma \ref{theorem: Young 1993}, we know that a state $z$ is stochastically stable if and only if $z$ is in a recurrent communication class and $z$ has the smallest stochastic potential. Therefore, a state $z^* = (b^*, \pi^*)$ is stochastically stable if and only if $\min_{i \in I} f(\U{i}{\pi_i^*, b_i^*})$ is the largest among all states $z \in \Zdagger$. That is,
\[\min_{i \in I} f(\U{i}{\pi_i^*, b_i^*}) = \max_{z =(b^z,\pi^z) \in \Zdagger} \min_{i \in I} f(\U{i}{\pi_i^z, b_i^z}).\]

\end{proof}

\section{Concluding Remarks}
This work develops a new learning dynamics that integrates hypothesis testing and utility-sensitive exploration, and analyzes how players adapt in general strategic environments. Beyond proving that the learning dynamics lead to playing approximate Nash equilibria for majority of the time in general normal form games, we show that the dynamics further select equilibria that maximize the minimum transformed utility across all players. This refinement is endogenously induced by the exploration mechanism adopted by agents in the learning process. Our analysis relies on interpreting the dynamics as a regular perturbation of a Markov process and characterizing the stochastically stable states using resistance-tree methods. Future directions include exploring equilibrium refinements induced by alternative exploration functions and applying this learning model to stochastic games and extensive-form games.
\bibliographystyle{apalike}
 \bibliography{library}

\begin{thebibliography}{}

\bibitem[Borowski et~al., 2019]{Borowski2019}
Borowski, H., Marden, J., and Shamma, J. (2019).
\newblock Learning to play efficient coarse correlated equilibria.
\newblock {\em Dynamic Games and Applications}, 9.

\bibitem[Cover, 1999]{cover1999elements}
Cover, T.~M. (1999).
\newblock {\em Elements of information theory}.
\newblock John Wiley \& Sons.

\bibitem[Foster and Young, 2006]{Foster2006}
Foster, D. and Young, H. (2006).
\newblock Regret testing: Learning to play nash equilibrium without knowing you
  have an opponent.
\newblock {\em Theoretical Economics}, 1:341--367.

\bibitem[Foster and Vohra, 1997]{Foster1997}
Foster, D.~P. and Vohra, R.~V. (1997).
\newblock Calibrated learning and correlated equilibrium.
\newblock {\em Games and Economic Behavior}, 21(1):40--55.

\bibitem[Foster and Young, 2003]{Foster2003}
Foster, D.~P. and Young, H.~P. (2003).
\newblock Learning, hypothesis testing, and nash equilibrium.
\newblock {\em Games and Economic Behavior}, 45:73--96.

\bibitem[Fudenberg and Levine, 1998]{Fudenberg1998}
Fudenberg, D. and Levine, D.~K. (1998).
\newblock {\em The theory of learning in games}, volume~2.
\newblock MIT press.

\bibitem[Gao and Pavel, 2017]{gao2017properties}
Gao, B. and Pavel, L. (2017).
\newblock On the properties of the softmax function with application in game
  theory and reinforcement learning.
\newblock {\em arXiv preprint arXiv:1704.00805}.

\bibitem[Germano and Lugosi, 2007]{Germano2007}
Germano, F. and Lugosi, G. (2007).
\newblock Global nash convergence of foster and young's regret testing.
\newblock {\em Games and Economic Behavior}, 60:135--154.

\bibitem[Hart and Mas-Colell, 2000]{Hart2000}
Hart, S. and Mas-Colell, A. (2000).
\newblock A simple adaptive procedure leading to correlated equilibrium.
\newblock {\em Econometrica}, 68(5):1127--1150.

\bibitem[Jindani, 2022]{Jindani2022}
Jindani, S. (2022).
\newblock Learning efficient equilibria in repeated games.
\newblock {\em Journal of Economic Theory}, 205:105551.

\bibitem[Kakade and Foster, 2004]{Kakade2004}
Kakade, S.~M. and Foster, D.~P. (2004).
\newblock Deterministic calibration and nash equilibrium.
\newblock {\em Lecture Notes in Artificial Intelligence (Subseries of Lecture
  Notes in Computer Science)}, 3120:33--48.

\bibitem[Kalai and Lehrer, 1993]{Kalai1993}
Kalai, E. and Lehrer, E. (1993).
\newblock Rational learning leads to nash equilibrium.
\newblock {\em Econometrica}, 61:1019.

\bibitem[Mannor et~al., 2007]{Mannor2007}
Mannor, S., Shamma, J.~S., and Arslan, G. (2007).
\newblock Online calibrated forecasts: Memory efficiency versus universality
  for learning in games.
\newblock {\em Machine Learning}, 67:77--115.

\bibitem[Marden et~al., 009a]{Marden2009a}
Marden, J.~R., Arslan, G., and Shamma, J.~S. (2009a).
\newblock Cooperative control and potential games.
\newblock {\em Trans. Sys. Man Cyber. Part B}, 39(6):1393–1407.

\bibitem[Marden and Shamma, 2012]{MardenShamma2012}
Marden, J.~R. and Shamma, J.~S. (2012).
\newblock Revisiting log-linear learning: Asynchrony, completeness and
  payoff-based implementation.
\newblock {\em Games and Economic Behavior}, 75:788--808.

\bibitem[Nachbar, 1997]{Nachbar1997}
Nachbar, J.~H. (1997).
\newblock Prediction, optimization, and learning in repeated games.
\newblock {\em Econometrica}, 65:275.

\bibitem[Nachbar, 2001]{Nachbar2001}
Nachbar, J.~H. (2001).
\newblock Bayesian learning in repeated games of incomplete information.
\newblock {\em Social Choice and Welfare}, 18:303--326.

\bibitem[Nachbar, 2005]{Nachbar2005}
Nachbar, J.~H. (2005).
\newblock Beliefs in repeated games.
\newblock {\em Econometrica}, 73:459--480.

\bibitem[Pradelski and Young, 2012]{Pradelski2012}
Pradelski, B.~S. and Young, H.~P. (2012).
\newblock Learning efficient nash equilibria in distributed systems.
\newblock {\em Games and Economic Behavior}, 75:882--897.

\bibitem[Young, 1993]{Young1993}
Young, H.~P. (1993).
\newblock The evolution of conventions.
\newblock {\em Econometrica}, 61:57.

\bibitem[Young, 2009]{Young2009}
Young, H.~P. (2009).
\newblock Learning by trial and error.
\newblock {\em Games and Economic Behavior}, 65:626--643.

\end{thebibliography}
\newpage
\appendix

\newpage
\section{Proof of Proposition \ref{prop: hypothesis test}} \label{appendix: proof of hypothesis test}
Consider any $i \in I, b_i \in \Delta^M_{-i}, \pi_{-i} \in \Delta_{-i}$ and any $T$ observations $\{a_{-i}^t\}_{t=1}^{T}$ sampled from $\pi_{-i}$. We have
\[
\E[\hat\pi_{-i}(a_{-i})] = \pi_{-i}(a_{-i}), \quad \forall a_{-i} \in A_{-i}.
\]
Since each indicator $\mathbf1\{a_{-i}^t=a_{-i}\} \in \{0,1\}$, by Hoeffding’s inequality, for any $\delta>0$,
\begin{align*}
    &\quad P\Bigl(\Bigl|\hat\pi_{-i}(a_{-i})- \pi_{-i}(a_{-i})\Bigr| \geq \delta \Bigr)\\
    &= P\Bigl(\Bigl|\frac1T\sum_{t=1}^T\mathbf1\{a_{-i}^t=a_{-i}\} - \frac1T \E\Bigl[\sum_{t=1}^T\mathbf1\{a_{-i}^t=a_{-i}\}\Bigr]\Bigr| \geq \delta \Bigr)\\
    &=P\Bigl(\Bigl|\sum_{t=1}^T\mathbf1\{a_{-i}^t=a_{-i}\} - \E\Bigl[\sum_{t=1}^T\mathbf1\{a_{-i}^t=a_{-i}\}\Bigr]\Bigr| \geq \delta\cdot T \Bigr)\\
    &\leq 2\exp\Bigl( - \frac{2\delta^2\cdot T^2}{T}\Bigr)\\
    &= 2\exp\Bigl( - 2T \cdot \delta^2\Bigr).
\end{align*}
By Cauchy–Schwarz,
\begin{align*}
    \|\hat\pi_{-i}-\pi_{-i}\|_2
    &= \sqrt{\sum_{a_{-i} \in A_{-i}}(\hat\pi_{-i}(a_{-i})-\pi_{-i}(a_{-i}))^2}\le \sqrt{|A_{-i}|} \cdot \max_{a_{-i}\in A_{-i}}|\hat\pi_{-i}(a_{-i})-\pi_{-i}(a_{-i})|.
\end{align*}
Hence, for any $\delta>0$,
\begin{align}
    \Pr\bigl(\|\hat\pi_{-i}-\pi_{-i}\|_2\ge\delta\bigr)
    &\le \Pr\Bigl(\max_{a_{-i}\in A_{-i}}|\hat\pi_{-i}(a_{-i})-\pi_{-i}(a_{-i})|\ge\tfrac\delta{\sqrt{|A_{-i}|}}\Bigr) \notag\\
    &\le 2\exp \bigl(-2T\,(\tfrac\delta{\sqrt{|A_{-i}|}})^2\bigr) \notag\\
    &= 2\exp \left( -\frac{2T}{|A_{-i}|}\cdot \delta^2\right). \label{eq: empirical tail bound}
\end{align}

\paragraph{Type-I error analysis.}
Suppose that \(\|\pi_{-i}-b_i\| \leq \tau\). 
Let 
\[\deltaT = \sqrt{\frac{|A_{-i}|}{2T}\cdot\ln\bigl(\tfrac{2}{\alpha}\bigr)}.\]
If $H_0$ is rejected: 
 \[
  \|\hat{\pi}_{-i}-b_i\|_2  > \tau + \deltaT,
\]
then from the triangle inequality, we have
\begin{align*}
\|\hat{\pi}_{-i} - \pi_{-i}\|_2 
&\geq \|\hat{\pi}_{-i} - b_i\|_2 - \|b_i - \pi_{-i}\|_2
> \left(\tau + \deltaT \right) - \tau
= \deltaT.
\end{align*}
Therefore, 
\begin{align*}
\Pr\left(\|\hat{\pi}_{-i}-b_i\|_2 >\tau+\deltaT \right)\leq \Pr\left(\|\hat{\pi}_{-i}-\pi_{-i}\|_2>\deltaT \right)
\stackrel{\eqref{eq: empirical tail bound}}{\leq} 2\exp\left(- \frac{2T}{|A_{-i}|} \cdot \Bigl( \deltaT \Bigr)^2\right)
= \alpha.
\end{align*}

\paragraph{Type-II error analysis}
Suppose that $d:=\|\pi_{-i}-b_i\|_2 > \tau$. 
When player $i$ fails to reject $H_0$, we have
\[\|\hat{\pi}_{-i}-b_i\|_2  \leq \tau + \deltaT.\]
From the triangle inequality, we have
\[\|\hat{\pi}_{-i}-\pi_{-i}\|_2 \geq \|\pi_{-i}-b_i\|_2 - \|\hat{\pi}_{-i}-b_i\|_2 \geq d - (\tau+\deltaT).\]
Therefore,
\begin{align*}
\Pr\left(\|\hat{\pi}_{-i}-b_i\|_2 \leq \tau+\deltaT \right) \leq \Pr\left(\|\hat{\pi}_{-i}-\pi_{-i}\|_2  \geq d - (\tau+\deltaT)\right)\stackrel{\eqref{eq: empirical tail bound}}{\leq} 2\exp\left(-\frac{2T}{|A_{-i}|}\cdot \left( d-\tau-\deltaT \right)^2\right).
\end{align*}
Suppose 
\begin{align}
    T \geq T(\alpha) &= \max_{\substack{i \in I,\; b_i \in \mathcal{B}_i,\\ \pi_{-i}\in \im(\Br^{\sigma}_i),\\\|\pi_{-i}-b_i\|_2 > \tau}}\frac{2|A_{-i}|}{(\|\pi_{-i} - b_i\|_2-\tau)^2}\,\ln\bigl(\frac{2}{{\alpha}}\bigr) \geq \frac{2|A_{-i}|}{(d-\tau)^2}\,\ln\bigl(\frac{2}{{\alpha}}\bigr). \label{eq: T bar alpha}
\end{align}
Then, 
\begin{align}
    \delta_{T({\alpha})} = \sqrt{\frac{|A_{-i}|}{2T({\alpha})}\cdot\ln\bigl(\tfrac{2}{\alpha}\bigr)} \leq \frac{d-\tau}{2}. \label{eq: delta T bar alpha}
\end{align}
Since function $\exp(-x)$ is decreasing in $x$,
\begin{align*}
&\quad\; \Pr\left(\|\hat{\pi}_{-i}-b_i\|_2 \leq \tau+\delta_{T(\alpha)} \right)  \leq 2\exp\left( -\frac{2T({\alpha})}{|A_{-i}|}\cdot \left( d-\tau-\delta_{T({\alpha})}\right)^2\right) \\
&\stackrel{\eqref{eq: delta T bar alpha}}{\leq} 2\exp \left( -\frac{2T({\alpha})}{|A_{-i}|}\cdot  \left( \frac{d-\tau}{2}\right)^2\right)\stackrel{\eqref{eq: T bar alpha}}{\leq} {\alpha}.
\end{align*}
\hfill $\square$

\section{Proof of Proposition \ref{lemma: sigma, tau, M}} \label{appendix: large M}

\begin{lemma}[\cite{gao2017properties}]\label{lemma: gao}
For any $\sigma > 0$, $i \in I$, the smooth response function with temperature parameter $\sigma$ is Lipschitz in $U_i(\cdot, b_i) \in \mathbb{R}^{|A_i|}$ with Lipschitz parameter $1/\sigma$. That is,  for any $\sigma > 0$, $i \in I$, $b_i, b'_i \in \Delta_{-i}$,
\begin{align}\label{eq: logit lipschitz in U}
    \|\Br^{\sigma}_i(b_i) - \Br^{\sigma}_i(b'_i)\|_2 \leq \frac{1}{\sigma} \|U_i(\cdot,b_i) - U_i(\cdot, b'_i)\|_2.
\end{align}
\end{lemma}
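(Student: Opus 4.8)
The plan is to recognize the smooth best response as the softmax (logit) map applied to a scaled utility vector, and to establish that softmax is $1$-Lipschitz in the $\ell_2$ norm; the claimed constant $1/\sigma$ then follows immediately from the scaling. Write $u := U_i(\cdot, b_i) \in \mathbb{R}^{|A_i|}$ and $u' := U_i(\cdot, b'_i)$, and define $\mathrm{softmax}: \mathbb{R}^{|A_i|} \to \Delta_i$ by $\mathrm{softmax}(x)_j = e^{x_j}/\sum_{k \in A_i} e^{x_k}$. By \eqref{eq:smoothed_br} we have $\Br_i^\sigma(b_i) = \mathrm{softmax}(u/\sigma)$, so it suffices to prove $\|\mathrm{softmax}(x) - \mathrm{softmax}(y)\|_2 \le \|x - y\|_2$ for all $x, y \in \mathbb{R}^{|A_i|}$. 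Applying this with $x = u/\sigma$ and $y = u'/\sigma$ and pulling out the factor $1/\sigma$ then yields \eqref{eq: logit lipschitz in U}.

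The core step is to bound the operator norm of the Jacobian of softmax. Writing $p = \mathrm{softmax}(x)$, direct differentiation gives $\partial p_j / \partial x_k = p_j(\delta_{jk} - p_k)$, so the Jacobian is $J(x) = \mathrm{diag}(p) - p p^\top$. This matrix is symmetric, and for any $v \in \mathbb{R}^{|A_i|}$ one computes $v^\top J(x) v = \sum_j p_j v_j^2 - \bigl(\sum_j p_j v_j\bigr)^2 = \mathrm{Var}_{p}(v)$, the variance of the coordinates of $v$ under the distribution $p$. Hence $J(x)$ is positive semidefinite, and since $\mathrm{Var}_p(v) \le \sum_j p_j v_j^2 \le \max_j v_j^2 \le \|v\|_2^2$, all eigenvalues of $J(x)$ lie in $[0,1]$, giving $\|J(x)\|_2 \le 1$ uniformly in $x$.

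Finally I would pass from the Jacobian bound to the Lipschitz bound via the fundamental theorem of calculus along the segment joining $x$ and $y$: writing $\mathrm{softmax}(y) - \mathrm{softmax}(x) = \int_0^1 J\bigl(x + t(y-x)\bigr)(y - x)\, dt$ and taking norms gives $\|\mathrm{softmax}(y) - \mathrm{softmax}(x)\|_2 \le \sup_{t \in [0,1]} \|J(x + t(y-x))\|_2 \,\|y - x\|_2 \le \|y - x\|_2$. The only genuine content is the spectral-norm bound on $J$, and even that reduces to the elementary observation that the quadratic form $v^\top J v$ is a variance and hence lies in $[0, \|v\|_2^2]$; the remainder is a routine mean-value-inequality argument. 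I do not anticipate a real obstacle here — the result is standard and cited — so the main care is simply keeping the $\ell_2$ norms and the variance estimate correct.
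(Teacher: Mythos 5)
Your proof is correct, and since the paper does not prove this lemma itself but defers entirely to the citation \cite{gao2017properties}, your argument fills that gap with exactly the standard proof from that reference: writing $\Br^\sigma_i$ as softmax of the scaled utility vector, bounding the softmax Jacobian $\mathrm{diag}(p)-pp^\top$ in spectral norm by $1$ via the variance identity $v^\top J v=\mathrm{Var}_p(v)$, and integrating along the segment. All steps check out (including the chain $\mathrm{Var}_p(v)\le \sum_j p_j v_j^2\le \|v\|_2^2$ and the pull-out of the $1/\sigma$ scaling), so there is nothing to correct.
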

\begin{proof}
    First, for any \(\sigma > 0\), define \(\pi^* \in \Delta\) as a fixed point of the smooth best response mapping. That is,
\begin{align}\label{eq:fixed_point}
    \pi^*_i(a_i) = \frac{\exp\left( \frac{1}{\sigma} U_i(a_i, \pi^*_{-i}) \right)}{\sum_{a_i' \in A_i} \exp\left( \frac{1}{\sigma} U_i(a_i', \pi^*_{-i}) \right)}, \quad \forall a_i \in A_i,\ \forall i \in I.
\end{align}
Such a fixed point \(\pi^*\) exists because the smooth best response function on the right-hand side of \eqref{eq:fixed_point} is continuous in \(\pi\), and the joint strategy space \(\Delta\) is a nonempty, compact, and convex subset of a Euclidean space. Existence then follows from Brouwer’s fixed point theorem.

We now consider a belief vector in the discretized belief space $\bdagger= (b_i^{\dagger})_{i \in I} \in \mathcal{B}= \Delta^{M}$ such that each $\bdagger_i$ has the minimum $\ell_2$ distance with the strategy profile $\pi_{-i}^*$. We also define the strategy profile $\pi^{\dagger} = \Br^{\sigma}(\bdagger)$. 

\paragraph{Part 1:} We show that $(\bdagger, \pidagger)$ satisfies the following conditions under Assumption \ref{assump: sigma, tau, M}:
\begin{subequations}
\begin{align}
    &\|\pi^*_{-i} - \bdagger_i\|_2 \leq \tau, \quad \forall i \in I,\label{eq:pi_b}\\
    &\|\Br^{\sigma}_i(\pi^*_{-i}) - \Br^{\sigma}_i(\bdagger_i)\|_2 \leq \tau, \quad \forall i \in I,\label{eq:br_b}\\
    &  \|\bdagger_i - \pidagger_{-i}\|_2 \leq \tau, \quad \forall i \in I. \label{eq:tau_consistent}
\end{align}
\end{subequations}
In particular, \eqref{eq:tau_consistent} concludes that $(b^{\dagger}, \pi^{\dagger})$ is consistent. 
\medskip 

\noindent\emph{Proof of Part 1:} To prove \eqref{eq:pi_b}, we note that since $\bdagger$ is the belief in the discretized set $\mathcal{B}$ that is closest to $\pi^{*}$, we must have 
\[|\pi^*_j(a_j) - b_{ij}^{\dagger}(a_j)| \leq \frac{1}{M}, \quad \forall a_j \in A_j, \quad \forall i, j \in I.\]
Hence, the $\ell_1$ distance between $b^{\dagger}_{ij}$ and $\pi^*_j$ can be bounded as follows:
\begin{align} \label{eq: bound on b_ij}
    \|\pi^*_j - b^{\dagger}_{ij}\|_1 = \sum_{a_j \in A_j} |\pi^*_j(a_j) - b^{\dagger}_{ij}(a_j)|
    \leq \sum_{a_j \in A_j} \frac{1}{M}
    \leq \frac{\max_{i' \in I} |A_{i'}|}{M}, \quad \forall i, j \in I.
\end{align}
Moreover, 
\begin{align}
    \|\pi^*_{-i} - b^{\dagger}_i\|_2  &\leq \|\pi^*_{-i} - b^{\dagger}_i\|_1  = \sum_{j \neq i} \|\pi^*_{j} - b^{\dagger}_{ij}\|_1 \stackrel{\eqref{eq: bound on b_ij}}{\leq}  \frac{\max_{i' \in I} |A_{i'}| \cdot |I|}{M} \notag\\
    &\stackrel{\eqref{eq: granularity M}}{\leq} \frac{\tau}{\left( 1 + \sqrt{\max_{i' \in I} |A_{i'}|} \cdot \Lu \cdot \La \cdot |I| /\sigma \right)} \label{pi_difference}\\
    &\leq \tau \notag
\end{align}

To prove \eqref{eq:br_b}, we first show that
\begin{align}
&\quad \; \|U_i(\cdot,\pi_{-i}^*) - U_i(\cdot, b^{\dagger}_i)\|_2 \notag\\
&= \left( \sum_{a_i \in A_i} \left[ \sum_{a_{-i} \in A_{-i}} (\pi_{-i}^*(a_{-i}) - b^{\dagger}_i(a_{-i})) u_i(a_i, a_{-i}) \right]^2 \right)^{1/2} \notag \\
&\leq \left( \sum_{a_i \in A_i} \left[ \|\pi_{-i}^* - b^{\dagger}_i\|_2 \cdot \|u_i(a_i, \cdot)\|_2 \right]^2 \right)^{1/2} \tag{Cauchy–Schwarz} \\
&= \|\pi_{-i}^* - b^{\dagger}_i\|_2 \cdot \left( \sum_{a_i \in A_i} \|u_i(a_i, \cdot)\|_2^2 \right)^{1/2} \leq \|\pi_{-i}^* - b^{\dagger}_i\|_2 \cdot \left( \sum_{a_i \in A_i} \sum_{a_{-i} \in A_{-i}} u_i(a_i, a_{-i})^2 \right)^{1/2} \notag \\
&\leq \|\pi_{-i}^* - b^{\dagger}_i\|_2 \cdot \left( |A_i| \cdot |A_{-i}| \cdot \Lu^2 \right)^{1/2} 
= \Lu \cdot \sqrt{|A_i| \cdot |A_{-i}|} \cdot \|\pi_{-i}^* - b^{\dagger}_i\|_2 \notag\\
&\leq \Lu \cdot \La \cdot \|\pi_{-i}^* - b^{\dagger}_i\|_2, \quad \forall i \in I. \label{eq: U over b}
\end{align}
Then, following Lemma \ref{lemma: gao}, we have
\begin{align}
    \|\Br^{\sigma}_i(\pi_{-i}^*) - \Br^{\sigma}_i(b^{\dagger}_i)\|_2 &\stackrel{\eqref{eq: logit lipschitz in U}}{\leq} \frac{1}{\sigma} \|U_i(\cdot, \pi_{-i}^*) - U_i(\cdot, b^{\dagger})\|_2 \notag\\ 
    & \stackrel{\eqref{eq: U over b}}{\leq} \frac{\Lu \cdot \La}{\sigma} \|\pi_{-i}^* - b^{\dagger}_i\|_2\notag\\
    & \stackrel{\eqref{pi_difference}}{\leq} \tau. \label{eq: logit lipschitz in b}
\end{align}

We next show that \eqref{eq:tau_consistent} holds: 
\begin{align*}
    &\quad\; \|b^{\dagger}_i - \pi_{-i}^{\dagger}\|_2 \leq \|b^{\dagger}_i - \pi_{-i}^{\dagger}\|_1 \leq \sum_{j \neq i} \|b^{\dagger}_{ij} - \pi^{\dagger}_j\|_1\\
    &= \sum_{j \neq i} \|b^{\dagger}_{ij} - \pi^*_{j} + \pi^*_{j} -\Br^{\sigma}_j(\pi_{-j}^*) +  \Br^{\sigma}_j(\pi_{-j}^*) - \pi^{\dagger}_j\|_1\\
    &\leq \sum_{j \neq i} \left( \|b^{\dagger}_{ij} - \pi^*_{j} \|_1 + \|\pi^*_{j} -\Br^{\sigma}_j(\pi_{-j}^*)\|_1 + \|\Br^{\sigma}_j(\pi_{-j}^*) - \pi^{\dagger}_j\|_1 \right)\\
    &\stackrel{}{\leq} \sum_{j \neq i} \left( \|b^{\dagger}_{ij} - \pi^*_{j} \|_1 + \|\Br^{\sigma}_j(\pi_{-j}^*) - \pi^{\dagger}_j\|_1 \right)\\
    &\leq \sum_{j \neq i} \left( \|b^{\dagger}_{ij} - \pi^*_{j} \|_1 + \sqrt{|A_j|} \|\Br^{\sigma}_j(\pi_{-j}^*) - \Br^{\sigma}_{j}(b^{\dagger}_{j})\|_2 \right)\\
    &\mathop{\leq}\limits^{\eqref{eq: bound on b_ij}}_{\eqref{eq: logit lipschitz in b}} \sum_{j \neq i} \left( \frac{\max_{i' \in I} |A_{i'}|}{M} + \sqrt{\max_{i' \in I} |A_{i'}|} \cdot \frac{\Lu \cdot \La}{\sigma} \cdot \frac{\max_{i' \in I} |A_{i'}| \cdot |I|}{M} \right)\\
    &\leq |I| \cdot \max_{i' \in I} |A_{i'}| \cdot \left( 1 + \sqrt{\max_{i' \in I} |A_{i'}|} \cdot \frac{\Lu \cdot \La}{\sigma} \cdot |I| \right) \cdot \frac{1}{M}\\
    &\stackrel{\eqref{eq: granularity M}}{\leq} \tau.
\end{align*}

\paragraph{Part 2:} We further show that $\pi^{\dagger}$ is an $\epsilon$-Nash equilibrium. 

\medskip 
\noindent\emph{Proof of Part 2:} For any $i \in I$, we note that
\begin{align}
    \|\pi^{\dagger}_i - \Br^{\sigma}_i(\pi^{\dagger}_{-i})\|_2 &= \|\Br^{\sigma}_i(b^{\dagger}_i) - \Br^{\sigma}_i(\pi^{\dagger}_{-i})\|_2 \notag\\
    &\stackrel{\eqref{eq: logit lipschitz in b}}{\leq}\frac{\Lu \cdot \La}{\sigma}\cdot \|b^{\dagger}_i -\pi^{\dagger}_{-i}\|_2 \notag\\
    &\stackrel{\eqref{eq:tau_consistent}}{\leq} \frac{\Lu \cdot \La \cdot \tau}{\sigma} \notag\\
    &\stackrel{\eqref{eq: sigma and tau}}{\leq} \frac{\epsilon}{2 \sqrt{|A|} \cdot \Lu}. \label{eq: bound on pi}
\end{align}
We use the triangle inequality to bound the difference in expected utilities. For any $i \in I$,
\begin{align}
 &|U_i(\pi^{\dagger}_i, \pi^{\dagger}_{-i}) - \max_{\pi'_i \in \Delta_i} U_i(\pi'_i, \pi^{\dagger}_{-i})|\notag\\
 \leq &|U_i(\pi^{\dagger}_i, \pi^{\dagger}_{-i}) - U_i(\Br^{\sigma}_i(\pi^{\dagger}_{-i}), \pi^{\dagger}_{-i})| + |U_i(\Br^{\sigma}_i(\pi^{\dagger}_{-i}), \pi^{\dagger}_{-i}) - \max_{\pi'_i \in \Delta_i} U_i(\pi'_i, \pi^{\dagger}_{-i})|. \label{eq:two_parts}
    \end{align}
For any $i \in I$, $\pi, \pi'\in \Delta_i$, we note that
\begin{align} 
    |\U{i}{\pi} - \U{i}{\pi'}| &= \left|\sum_{a \in A} (\pi(a) - \pi'(a)) u_i(a)\right| \leq \|\pi - \pi'\|_2 \cdot \|u_i\|_2 \notag\\
    &\leq \|\pi - \pi'\|_2 \cdot \sqrt{|A|} \max_{i \in I, a \in A} u_i(a) . \notag
\end{align}
Therefore, 
\begin{align}\label{eq: Utility lipschitz in pi}
    |U_i(\pi^{\dagger}_i, \pi^{\dagger}_{-i}) - U_i(\Br^{\sigma}_i(\pi^{\dagger}_{-i}), \pi^{\dagger}_{-i})|  \leq \sqrt{|A|} \cdot \Lu \cdot \left\|\pi^{\dagger} - \left(\Br^{\sigma}_i(\pi^{\dagger}_{-i}), \pi^{\dagger}_{-i})\right)\right\|_2. 
\end{align}
Following from \eqref{eq: Utility lipschitz in pi} and \eqref{eq: bound on pi}, we have 
\begin{align}\label{eq:part_1}
    |U_i(\pi^{\dagger}_i, \pi^{\dagger}_{-i}) - U_i(\Br^{\sigma}_i(\pi^{\dagger}_{-i}), \pi^{\dagger}_{-i})|  \leq  \frac{\epsilon}{2}.
\end{align}
Additionally, we know that 
\[\Br_i^{\sigma}(b_{i}^{\dagger})= \arg\max_{\pi_i \in \Delta_i}  U_i(\pi_i, b_i) + \sigma H(\pi)_i,
\]
where $H(\pi_i) = -\sum_{a_i} \pi_i(a_i) \log \pi_i(a_i)$ is the entropy function of $\pi_i$. 
Therefore, 
\[
U_i(\Br^{\sigma}_i(b^{\dagger}_i), b^{\dagger}_i) + \sigma H(\Br^{\sigma}_i(b^{\dagger}_i))\geq U_i(\pi^*_i, b^{\dagger}_i) + \sigma H(\pi^*_i) .\]
Since $\Br^{\sigma}_i(b_i) \in \Delta_i$ is a distribution over \( |A_i| \) elements, its entropy satisfies
$H(\Br^{\sigma}_i(b_i)) \leq \log(|A_i|)$ \cite{cover1999elements}.  
Since entropy is non-negative for any distribution, we have $H(\pi^*_i) \geq 0$.
Therefore, we have 
\begin{align*}
U_i(\Br^{\sigma}_i(b^{\dagger}_i), b^{\dagger}_i) + \sigma \cdot \log(|A_i|)&\geq U_i(\Br^{\sigma}_i(b^{\dagger}_i), b^{\dagger}_i) + \sigma \cdot H(\Br^{\sigma}_i(b^{\dagger}_i))\\
&\geq U_i(\pi^*_i, b^{\dagger}_i) + \sigma H(\pi^*_i)\geq \max_{\pi_i \in \Delta_i} U_i(\pi_i, b^{\dagger}_i).
\end{align*}
Rearranging, we have 
\begin{align}\label{eq:part_2}
    |\max_{\pi_i \in \Delta_i} U_i(\pi_i, b^{\dagger}_i) - U_i(\Br^{\sigma}_i(b^{\dagger}_i), b^{\dagger}_i)|\leq \sigma \cdot \log(|A_i|)\leq \sigma \cdot \log(\max_{i \in I}|A_i|)\stackrel{\eqref{eq: sigma and tau}}{\leq} \frac{\epsilon}{2}.
\end{align}
 Combining \eqref{eq:two_parts}, \eqref{eq:part_1} and \eqref{eq:part_2}, we have $|U_i(\pi^{\dagger}_i, \pi^{\dagger}_{-i}) - \max_{\pi'_i \in \Delta_i} U_i(\pi'_i, \pi^{\dagger}_{-i})| \leq \epsilon$, which concludes that $\pi^{\dagger}$ is an $\epsilon$-Nash equilibrium. \end{proof} 

\section{Proof of Lemma \ref{lemma: P_ep}} 
\label{appendix: transition matrix}

\begin{proof}
        Recall that the state space $\mathcal{Z}$ is 
    \[\mathcal{Z} := \{(b, \pi) \mid b_i \in \mathcal{B}_i, ~ \pi_i = \Br^{\sigma}_i(b_i), \quad \forall i \in I\}.\]
    Since $\mathcal{B}_i = \Delta^M_i$ is finite for all $i \in I$, the state space $\mathcal{Z}$ is finite. For any $\xi \in (0,1)$, $z= (b, \pi), z'= (b', \pi') \in \mathcal{Z}$, the transition probability satisfies
    \begin{subequations}
    \begin{align*}
        &\quad P^{\xi}_{zz'} \notag \\
        &\geq \Pr(\text{all players test,  fail to reject or reject and explore, then update to belief specified in } b') \notag\\
        &= \Pr(\text{all players test}) \cdot \Bigl(\sum_{I_r \subseteq I}  \Pr( I_r \text{ reject}, I\setminus I_r \text{ fail to reject }) \cdot \Pr(I\setminus I_r \text{ explore}) \Bigr) \\
        &\qquad \cdot \Pr( \text{all update belief to } b')\\
        &= \Pr(\text{all players test}) \cdot \Bigl(\sum_{I_r \subseteq I}  \Pr( I_r \text{ reject}, I\setminus I_r \text{ fail to reject }) \cdot \prod_{i \in I\setminus I_r} \xi^{f_i(\U{i}{\pi_i,b_i})} \Bigr) \notag\\
        &\qquad \cdot \Pr( \text{all update belief to } b')\\
        &\geq \Pr(\text{all players test}) \cdot \Bigl( \sum_{I_r \subseteq I} \Pr( I_r \text{ reject}, I\setminus I_r \text{ fail to reject }) \Bigr)\cdot \min_{I_r \subset I}\prod_{i \in I\setminus I_r} \xi^{f_i(\U{i}{\pi_i,b_i})} \\
        &\qquad \cdot \Pr( \text{all update belief to } b') \notag\\
        &= \prod_{i \in I}\ptest\cdot 1 \cdot \min_{I_r \subset I}\prod_{i \in I\setminus I_r} \xi^{f_i(\U{i}{\pi_i,b_i})} \cdot \prod_{i \in I} \psi_i(b'_i|b_i) > 0.
    \end{align*}
    \end{subequations}
   Since transition $P^{\xi}_{zz'}$ is positive for any $z, z' \in Z$, the Markov chain $P^{\xi}$ is aperiodic and irreducible, and the Markov process has a unique stationary distribution $\mu^{\xi}$.
\end{proof}

\section{Sufficient Condition for Assumption \ref{assump: bad strategy}}\label{appendix: bad strategy}

\begin{lemma}\label{lemma:simple_condition}
    Suppose 
    \[|\im(\Br^{\sigma}_i)| > |I|, \quad \forall i \in I,\]
    and 
    \[\tau <\min_{b_{-i} \in \mathcal{B}_{-i}, i \in I}~\max_{\tilde{b}_i\in \mathcal{B}_i } \min\left\{\|\tilde{b}_i - \Br_{-i}^{\sigma}(b_{-i})\|_2, ~ \min_{j \in I \setminus \{i\}}\{\|\Br^{\sigma}_i(\tilde{b}_i) - b_{ji}\|_2\}\right\}.\]
    Then Assumption \ref{assump: bad strategy} holds.
\end{lemma}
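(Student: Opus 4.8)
The plan is to split the statement into an almost-immediate deduction of Assumption~\ref{assump: bad strategy} from the displayed bound on $\tau$, together with a combinatorial argument showing that the image condition $|\im(\Br^{\sigma}_i)| > |I|$ forces the right-hand side of that bound to be strictly positive, so that a valid $\tau > 0$ actually exists.

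First I would verify that the $\tau$-bound yields the assumption directly. Fix any player $i \in I$ and any belief profile $b_{-i} \in \mathcal{B}_{-i}$, and set $\pi_{-i} = \Br^{\sigma}_{-i}(b_{-i})$. Since $\mathcal{B}_i$ is finite, the inner maximum over $\tilde{b}_i \in \mathcal{B}_i$ is attained at some $\tilde{b}_i^{*}$; write $\tilde{\pi}_i^{*} = \Br^{\sigma}_i(\tilde{b}_i^{*})$. The hypothesis then gives
\[
\tau < \min\left\{\|\tilde{b}_i^{*} - \pi_{-i}\|_2,\ \min_{j \neq i}\|\tilde{\pi}_i^{*} - b_{ji}\|_2\right\},
\]
so that $\|\tilde{b}_i^{*} - \pi_{-i}\|_2 > \tau$ and $\|\tilde{\pi}_i^{*} - b_{ji}\|_2 > \tau$ for every $j \neq i$, which is exactly the conclusion of Assumption~\ref{assump: bad strategy}.

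Second, to justify that the hypothesis is non-vacuous, I would show the min--max quantity is strictly positive using the image condition. As the outer minimum ranges over the finite index set $\{(i, b_{-i})\}$, it suffices to exhibit, for each fixed $(i, b_{-i})$, a belief $\tilde{b}_i \in \mathcal{B}_i$ with $\tilde{b}_i \neq \pi_{-i}$ and $\Br^{\sigma}_i(\tilde{b}_i) \neq b_{ji}$ for all $j \neq i$, which makes the inner minimum strictly positive. The counting runs as follows: the forbidden strategy values $\{b_{ji}\}_{j \neq i}$ exclude at most $|I|-1$ elements of $\im(\Br^{\sigma}_i)$, and since $|\im(\Br^{\sigma}_i)| > |I|$, at least two admissible strategy values survive. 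Picking one preimage belief for each admissible strategy yields at least two distinct beliefs (distinct because their images differ), and at most one of them can coincide with the single point $\pi_{-i}$; hence at least one admissible $\tilde{b}_i$ also satisfies $\tilde{b}_i \neq \pi_{-i}$, giving strictly positive distances on both coordinates.

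I expect the only genuine obstacle to be the coincidence case in which $\pi_{-i}$ lands on the discretized grid and equals the unique preimage of an otherwise-admissible strategy; this is precisely why the \emph{strict} inequality $|\im(\Br^{\sigma}_i)| > |I|$ is required rather than $|\im(\Br^{\sigma}_i)| \geq |I|$, as it supplies one spare admissible strategy and thus a second admissible belief available to avoid $\pi_{-i}$. Everything else is routine: finiteness of $\mathcal{B}_i$ and $\mathcal{B}_{-i}$ guarantees all extrema are attained, and strict positivity of each inner minimum is preserved under the finite outer minimum.
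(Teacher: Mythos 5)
Your proposal is correct and follows essentially the same route as the paper's proof: the strict $\tau$-bound applied at the maximizing belief $\tilde{b}_i^{*} \in \mathcal{B}_i$ immediately yields Assumption~\ref{assump: bad strategy}, and the positivity of the right-hand side is established by the same double pigeonhole count (at most $|I|-1$ forbidden strategies $\{b_{ji}\}_{j \neq i}$ leave two admissible points of $\im(\Br^{\sigma}_i)$, whose distinct preimage beliefs cannot both coincide with the single point $\Br^{\sigma}_{-i}(b_{-i})$). Your remark on why the strict inequality $|\im(\Br^{\sigma}_i)| > |I|$ is needed, rather than $\geq |I|$, accurately identifies the role of the spare admissible strategy in the paper's argument.
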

\begin{proof}
    We first show that 
    \[\min_{b_{-i} \in \mathcal{B}_{-i}, i \in I}\max_{\tilde{b}_i\in \mathcal{B}_i } \min\left\{\|\tilde{b}_i - \Br_{-i}^{\sigma}(b_{-i})\|_2, ~ \min_{j \in I \setminus \{i\}}\{\|\Br^{\sigma}_i(\tilde{b}_i) - b_{ji}\|_2\}\right\} > 0.\]
    Consider any $i \in I$, any $b_{-i} \in \mathcal{B}_{-i}$. Since $b_{ji}$ for $j \in I \setminus \{i\}$ are $|I|-1$ points in $\Delta_i$, and $\im(\Br^{\sigma}_i) \subset \Delta_i$, by Pigeonhole principle, there exists at least two points $\pi^\dagger, \pi' \in \im(\Br^{\sigma}_i)$ such that \[\pi' \neq \pidagger \neq b_{ji}, \quad \forall j \in I \setminus \{i\}.\]
    Let $b'_i, \bdagger_i \in \mathcal{B}_i$ be such that \[\Br^{\sigma}_i(b'_i) = \pi', \quad \Br^{\sigma}_i(\bdagger_i) = \pidagger.\]
    Since $\Br_{-i}^{\sigma}(b_{-i})$ is one point in $\mathcal{B}_i$, by Pigeonhole principle, at least one of $b'_i$ or $\bdagger_i$ is different from $\Br_{-i}^{\sigma}(b_{-i})$. Without loss of generality, suppose $\bdagger_i \neq \Br_{-i}^{\sigma}(b_{-i})$. Then, 
    \[\|\bdagger_i - \Br_{-i}^{\sigma}(b_{-i})\|_2 > 0.\]
    Since $\pidagger \neq b_{ji}$ for all $j \in I \setminus \{i\}$, 
    \[\min_{j \in I \setminus \{i\}}\{\|\Br^{\sigma}_i(\bdagger_i) - b_{ji}\|_2\} > 0.\]
    Therefore,
    \[\max_{\tilde{b}_i\in \mathcal{B}_i } \min\left\{\|\tilde{b}_i - \Br_{-i}^{\sigma}(b_{-i})\|_2, ~ \min_{j \in I \setminus \{i\}}\{\|\Br^{\sigma}_i(\tilde{b}_i) - b_{ji}\|_2\}\right\} > 0, \quad \forall i \in I, b_{-i} \in \mathcal{B}_{-i}.\]
    Consider any $i \in I$, any $b_{-i} \in \mathcal{B}_{-i}$. Let $\pi_{-i} = \Br^\sigma_{-i}(b_{-i})$, and
    \[\bdagger_i = \argmax_{\tilde{b}_i\in \mathcal{B}_i } \min\left\{\|\tilde{b}_i - \Br_{-i}^{\sigma}(b_{-i})\|_2, ~ \min_{j \in I \setminus \{i\}}\{\|\Br^{\sigma}_i(\tilde{b}_i) - b_{ji}\|_2\}\right\}.\]
    Since 
    \[\tau < \max_{\tilde{b}_i\in \mathcal{B}_i } \min\left\{\|\tilde{b}_i - \Br_{-i}^{\sigma}(b_{-i})\|_2, ~ \min_{j \in I \setminus \{i\}}\{\|\Br^{\sigma}_i(\tilde{b}_i) - b_{ji}\|_2\}\right\},\]
    we have
    \[\tau < \min\left\{\|\bdagger_i - \Br_{-i}^{\sigma}(b_{-i})\|_2, ~ \min_{j \in I \setminus \{i\}}\{\|\Br^{\sigma}_i(\bdagger_i) - b_{ji}\|_2\}\right\}.\]
    Therefore, $\bdagger_i$ and $\pi_i^{\dagger}= \Br_{i}^{\sigma}(\bdagger_i)$ satisfy 
    \begin{align*}
            \|\bdagger_i - \Br_{-i}^{\sigma}(b_{-i})\|_2> \tau, \text{ and }  \|\pi^{\dagger}_i -  b_{ji}\|_2 > \tau, \quad \forall j \neq i.
    \end{align*}
    Hence, Assumption \ref{assump: bad strategy} is satisfied. 
\end{proof}

\section{Proof of Lemma \ref{lemma: regular perturbation}}\label{apx:regular_perturbation}
\begin{enumerate}[label=(\arabic*), itemsep=1em]
    \item 
For any $z = (b, \pi) \in \mathcal{Z}$, $z' = (b', \pi
) \in \mathcal{Z}$, let $I_c$ be the set of players whose beliefs are consistent in $z$, and $I_{nc}$ be the set of players whose beliefs are not consistent in $z$:
\begin{align*}
    I_c &= \{i \in I \mid \|b_i - \pi_{-i}\|_2 \leq \tau\},\\
    I_{nc} &= \{i \in I \mid \|b_i - \pi_{-i}\|_2 > \tau\}.
\end{align*}
Let $I_d$ be the set of players whose belief changes from $z$ to $z'$, and $I_{nd}$ be set of players whose belief does not change from $z$ to $z'$
\begin{align*}
    I_d = \{i \in I \mid b_i \neq b'_i\},\\
    I_{nd} = \{i \in I \mid b_i = b'_i\}.
\end{align*}
Let $(a^t)_{t = 1}^T$ be sequence of realized action profiles in one episode. From Proposition \eqref{prop: hypothesis test},
\begin{align*}
    &\Pr\left((a^t)_{t = 1}^T \in R_i(\xi) \right) \leq \xi^{\bar{u}}, \quad \forall i \in I_c, \quad 
    \Pr\left((a^t)_{t = 1}^T \not \in R_i(\xi) \right) \leq \xi^{\bar{u}}, \quad \forall i \in I_{nc}. 
\end{align*} 
We first show that as $\xi$ goes to zero, all players with inconsistent beliefs reject their null hypothesis with probability $1$, and all players with $\tau$-consistent beliefs fail to reject their null hypothesis with probability $1$ (if they choose to conduct a test):
\begin{align}
    1&\geq \Pr( i \in I_c \text{ all fail to reject}, ~i \in I_{nc} \text{ all reject})\notag\\
    &= \Pr\left((a^t)_{t = 1}^T \not \in \bigcup_{i \in I_c} R_i(\xi), (a^t)_{t = 1}^T \in \bigcap_{i \in I_{nc}} R_i(\xi) \right)\notag\\
    &= 1 -  \Pr\left((a^t)_{t = 1}^T \in \bigcup_{i \in I_c} R_i(\xi) \text{ or } (a^t)_{t = 1}^T \not \in \bigcap_{i \in I_{nc}} R_i(\xi) \right)\notag\\
    &\geq 1- \sum_{i \in I_c} \Pr\left((a^t)_{t = 1}^T  \in R_i(\xi) \right) -  \sum_{i \in I_{nc}} \Pr\left((a^t)_{t = 1}^T \not \in R_i(\xi)\right)\notag\\
    &\geq 1- |I| \cdot \xi^{\bar{u}}. \label{eq: prob of test result}
\end{align}
Therefore, 
\begin{align}
    \lim_{\xi \to 0} \Pr( i \in I_c \text{ all fail to reject}, i \in I_{nc} \text{ all reject}) &= 1, \label{eq: correct test prob 1}\\
    \lim_{\xi \to 0} \Pr( \neg (i \in I_c \text{ all fail to reject}, i \in I_{nc} \text{ all reject})) &= 0. \label{eq: incorrect test prob 0}
\end{align}
For any $z, z' \in Z$, the transition probability $P^0_{zz'}$ from $z$ to $z'$ as $\xi \to 0$ satisfies:
\begin{align}
    & \quad P^0_{zz'} = \lim_{\xi \to 0}\Pr(z \to z') \notag\\
    &= \lim_{\xi \to 0} \Pr(z \to z' \text{ and } (i \in I_c \text{ all fail to reject}, i \in I_{nc} \text{ all reject})) \notag\\
    &\quad + \lim_{\xi \to 0} \Pr(z \to z' \text{ and } \neg (i \in I_c \text{ all fail to reject}, i \in I_{nc} \text{ all reject})) \notag\\
    &\stackrel{\eqref{eq: incorrect test prob 0}}{=}\lim_{\xi \to 0} \Pr(z \to z' \text{ and } (i \in I_c \text{ fail to reject}, i \in I_{nc} \text{ all reject})) \notag\\
    &= \lim_{\xi \to 0} \Pr(z \to z' \mid (i \in I_c \text{ fail to reject}, i \in I_{nc} \text{ all reject})) \cdot \Pr(i \in I_c \text{ fail to reject}, i \in I_{nc} \text{ all reject}) \notag\\
    &\stackrel{\eqref{eq: correct test prob 1}}{=} \lim_{\xi \to 0} \Pr(z \to z' \mid (i \in I_c \text{ fail to reject}, i \in I_{nc} \text{ all reject})) \notag.
    \end{align}
Conditioned on the event that all players \(i \in I_c\) fail to reject the null hypothesis and all players \(i \in I_{nc}\) reject the null hypothesis, the probability of transitioning from state \(z\) to state \(z'\) depends only on: (i) the exploration probabilities of players in \(I_c\), and (ii) the probabilities of resampling the specific beliefs in \(z'\) for those players who resample. These probabilities are independent across players. Therefore, 
\begin{align}    &\lim_{\xi \to 0} \Pr(z \to z' \mid (i \in I_c \text{ fail to reject}, i \in I_{nc} \text{ all reject}))\notag\\
=& \lim_{\xi \to 0} \prod_{i \in I} \Pr(z_i \to z'_i \mid (i \in I_c \text{ fail to reject}, i \in I_{nc} \text{ all reject})) \notag\\
    =& \prod_{i \in I} \lim_{\xi \to 0} \Pr(z_i \to z'_i \mid (i \in I_c \text{ fail to reject}, i \in I_{nc} \text{ all reject})). \notag
\end{align}

Analogous to the above argument, the the transition probability $P^0_{i, zz'}$ can be written as follows: 
\begin{align*} 
    P^0_{i, zz'}= \lim_{\xi \to 0} \Pr(z_i \to z'_i \mid (i \in I_c \text{ fail to reject}, i \in I_{nc} \text{ all reject})).
\end{align*}
Therefore, \begin{align*}
    P^0_{zz'} &=\prod_{i \in I} P^{0}_{i, zz'},
\end{align*}
We next compute the transition probability $P^{0}_{i, zz'}$ for each player $i$. As $\xi\to 0$, the state transition probability of player $i \in I_c$ is derived from the following event in sequence: 
\begin{itemize}
    \item Player $i$ starts a test with probability $\gamma_i$; 
    \item After a test is started, player $i$ fails to reject the null hypothesis with probability $1$ following from \eqref{eq: correct test prob 1};
    \item Player $i$ explores with probability 0 since $\lim_{\xi \to 0} \xi^{f_i(\U{i}{\pi_i,b_i})}=0$.
\end{itemize}
Therefore, 
\begin{align*}
    P^0_{i, zz'} 
    &= \begin{cases}
    1, &\text{ if }\, \|b_i - \pi_{-i}\|_2 \leq \tau,\, b_i = b'_i, \\
    0, &\text{ if }\, \|b_i - \pi_{-i}\|_2 \leq \tau,\, b_i \neq b'_i.\\
    \end{cases}
\end{align*}
Similarly, 
\begin{itemize}
    \item Each player $i \in I_{nc}$ starts a test with probability $\gamma_i$; 
    \item If a test is started, player $i$ rejects the null hypothesis with probability $1$ following \eqref{eq: correct test prob 1} and chooses a new belief according to $\psi_i(\cdot|b_i)$. 
\end{itemize}
That is, 
\begin{align*}
    P^0_{i, zz'} 
    &= \begin{cases}
    (1-\pc) + \pc \cdot \psi_i(b'_i |b_i), &\text{ if } \|b_i - \pi_{-i}\|_2 > \tau,\, b_i = b'_i, \\
    \pc \cdot \psi_i(b'_i |b_i), &\text{ if } \|b_i - \pi_{-i}\|_2 > \tau,\, b_i \neq b'_i.
    \end{cases}
\end{align*}
Summarizing,  we have
\begin{align*} 
    P^{0}_{i, zz'} &= 
\begin{cases}
    1, \quad &\text{ if }\, \|b_i - \pi_{-i}\|_2 \leq \tau,\, b_i = b'_i, \\
    0, \quad &\text{ if }\, \|b_i - \pi_{-i}\|_2 \leq \tau,\, b_i \neq b'_i, \\
    (1-\pc) + \pc \cdot \psi_i(b'_i |b_i), \quad &\text{ if } \|b_i - \pi_{-i}\|_2 > \tau,\, b_i = b'_i, \\
    \pc  \cdot \psi_i(b'_i |b_i), \quad &\text{ if } \|b_i - \pi_{-i}\|_2 > \tau,\, b_i \neq b'_i.
\end{cases}
\end{align*}

\item Condition (i) follows from Lemma \ref{lemma: P_ep}, and condition (ii) follows by the definition. 
To verify condition (iii), we need to show that $0 < \lim_{\xi \to 0} P^{\xi}_{zz'}\cdot \xi^{-r_{zz'}} < \infty$, where $r_{zz'}$ is defined in \eqref{eq:rzz} and can be rewritten as follows: 
\[r_{zz'} = \sum_{i \in I_d \cap I_c} f_i(\U{i}{\pi_i,b_i}) \geq 0, \quad \forall z = (b,\pi), z' \in \mathcal{Z}.\]
We first compute the lower bound of $\lim_{\xi \to 0} P^{\xi}_{zz'}\cdot \xi^{-r_{zz'}}$. We consider a particular event that leads the state transition from $z$ to $z'$:
\begin{itemize}
    \item All players in $I_{nd}$ (whose beliefs do not change from $z$ to $z'$) do not start hypothesis tests.
    \item All players in $I_c \cap I_d$ (whose beliefs are consistent and do change):
    \begin{enumerate}
        \item start hypothesis tests,
        \item fail to reject their null hypothesis,
        \item explore and resample new beliefs as specified in $b'$.
    \end{enumerate}
    \item All players in $I_{nc} \cap I_d$ (whose beliefs are not consistent and do change):
    \begin{enumerate}
        \item start hypothesis tests,
        \item reject their null hypothesis,
        \item update beliefs as specified in $b'$.
    \end{enumerate}
\end{itemize}
We can lower bound the transition probability $P^{\xi}_{zz'}$ by the probability of the above event: 
\begin{subequations}
\begin{align}
    P^{\xi}_{zz'} & \geq \Pr\Bigl(i \in I_{nd} \text{ does not test}, \Bigr. \notag\\
    &\qquad \quad i \in I_d\bigcap I_c \text{ tests, fails to reject, and explores to new belief } b'_i, \notag\\
    &\left. \qquad \quad  i \in I_d \bigcap I_{nc} \text{ tests, rejects, and changes to new belief } b'_i \right) \notag\\
    &= \prod_{i \in I_{nd}} \Pr\left(i\text{ does not test}\right) \cdot \prod_{i \in I_d} \Pr\left (i \text{ tests}\right) \notag\\
    &\quad \cdot \Pr\left( i \in I_d\bigcap I_c \text{ all fail to reject}, i \in I_d\bigcap I_{nc} \text{ all reject}\right) \cdot \prod_{i \in I_d\bigcap I_c} \Pr\left(i \text{ explores} \right) \notag\\
    &\quad \cdot \Pr\left( \text{all } i \in I_d \text{ choose }b'_i\right)  \label{eq: P_ep lowerbound1}\tag{*}\\
    &\geq \prod_{i \in I_{nd}} \Pr\left(i\text{ does not test}\right) \cdot \prod_{i \in I_d} \Pr\left (i \text{ tests}\right) \cdot \Pr\left( \text{all } i \in I_d \text{ choose }b'_i\right) \notag\\
    &\quad \cdot \Pr\left( i \in I_c \text{ all fail to reject}, i \in I_{nc} \text{ all reject}\right) \cdot \prod_{i \in I_d\bigcap I_c} \Pr\left(i \text{ explores} \right) \notag\\
    &\stackrel{\eqref{eq: prob of test result}}{\geq} \prod_{i \in I_{nd}} (1-\gamma_i) \cdot \prod_{i \in I_d}\gamma_i \cdot \prod_{i \in I_d} \psi_i(b'_i|b_i) \cdot \left(1- |I| \cdot \xi^{\bar{u}} \right) \cdot \prod_{i \in I_d \cap I_c}\xi^{f_i(\U{i}{\pi_i,b_i})}\notag\\
    &= \prod_{i \in I_{nd}} (1-\gamma_i) \cdot \prod_{i \in I_d}\gamma_i \cdot \prod_{i \in I_d} \psi_i(b'_i|b_i) \cdot \left(1- |I| \cdot \xi^{\bar{u}} \right) \cdot \xi^{\sum_{i \in I_d \cap I_c} f_i(\U{i}{\pi_i,b_i})}, \notag 
\end{align}
\end{subequations}
where \eqref{eq: P_ep lowerbound1} is due to the fact that players independently start a test or explore. 
Consequently, 
\begin{align*}
    &\quad \lim_{\xi \to 0} P^{\xi}_{zz'} \cdot \xi^{-r_{zz'}}\\
    & \geq \lim_{\xi \to 0} \prod_{i \in I_{nd}} (1-\gamma_i) \cdot\prod_{i \in I_d} \psi_i(b'_i|b_i) \cdot \prod_{i \in I_d}\gamma_i \cdot \left(1- |I| \cdot \xi^{\bar{u}} \right) \cdot \xi^{\sum_{i \in I_d \cap I_c} f_i(\U{i}{\pi_i,b_i})} \cdot \xi^{-\sum_{i \in I_d \cap I_c} f_i(\U{i}{\pi_i,b_i})}\\
    &= \prod_{i \in I_{nd}} (1-\gamma_i)\cdot \pc^{|I_d|} \cdot \prod_{i \in I_d} \psi_i(b'_i|b_i) > 0.
\end{align*}

Next, we compute an upper bound of $\lim_{\xi \to 0} P^{\xi}_{zz'} \cdot \xi^{-r_{zz'}}$. For any $z, z' \in Z$, the state transition probability $P^{\xi}_{zz'}$ satisfies
\begin{align}
    &\quad P^{\xi}_{zz'} \notag\\
    &= \Pr\left( \text{all } i \in I \text{ transit from $b_i$ to $b'_i$}\right) \notag\\
    &\leq \Pr\left( \text{all } i \in I_d \text{ transit from $b_i$ to $b'_i$}\right)\notag\\
    &= \Pr\left( \text{all } i \in I_d \text{ test, either reject or fail to reject and explore, then transit from $b_i$ to $b'_i$}\right)\notag \\
    &=  \Pr\left( \text{all } i \in I_d \text{ test}\right) \notag\\
    & \quad \cdot \sum_{I_r \subseteq I_d} \Bigl( \Pr\left(\text{all }i \in I_d\setminus I_r \text{ fail to reject, all } i \in I_r \text{ reject} \right)\cdot \Pr\left( \text{all } i \in I_d\setminus I_r \text{ choose to explore} \right) \Bigr)\notag\\
    &\quad \cdot \Pr\left( \text{all } i \in I_d \text{ choose }b'_i\right)\notag\\
    &= \prod_{i \in I_d}\gamma_i \cdot \prod_{i \in I_d} \psi_i(b'_i|b_i) \notag\\
    &\quad \cdot \sum_{I_r \subseteq I_d} \Bigl( \Pr\left(\text{all }i \in I_d\setminus I_r \text{ fail to reject, all } i \in I_r \text{ reject} \right)\cdot \Pr\left( \text{all } i \in I_d\setminus I_r \text{ choose to explore} \right)\Bigr).\label{eq: P_ep upperbound}
\end{align}

We consider a specific set $I_r^{\dagger} = I_d \bigcap I_{nc}$. We have $I_d\setminus I_r^{\dagger} = I_d \setminus (I_d \bigcap I_{nc}) = I_d \bigcap I_{c}$. Then, 
\begin{align*}
   &\Pr\left(\text{all }i \in I_d\setminus I_r^{\dagger} \text{ fail to reject, all } i \in I_r^{\dagger}  \text{ reject} \right)\\
   =&\Pr\left(\text{all }i \in I_d \bigcap I_c \text{ fail to reject, all } i \in I_d \bigcap I_{nc} \text{ reject} \right) \\
    \geq &\Pr\left(\text{all }i \in I_c \text{ fail to reject, all } i \in I_{nc} \text{ reject} \right) \\
    \stackrel{\eqref{eq: prob of test result}}{\geq}& 1- |I| \cdot \xi^{\bar{u}}.
\end{align*}
Therefore, 
\begin{align}
    &\quad \sum_{\substack{I_r \neq I_r^{\dagger}\\ I_r \subseteq I_d}} \Pr\left(\text{all }i \in I \setminus I_r \text{ fail to reject, all } i \in I_r \text{ reject} \right) \notag\\
    &\leq 1 - \Pr\left(\text{all }i \in I \setminus I_r^{\dagger} \text{ fail to reject, all } i \in I_r^{\dagger} \text{ reject} \right) \notag\\
    &\leq |I| \cdot \xi^{\bar{u}}. \label{eq: test result upper bound}
\end{align}
Hence,
\begin{subequations}
\begin{align}
    &\qquad P^{\xi}_{zz'} \notag\\
    &\stackrel{\eqref{eq: P_ep upperbound}}{\leq} \prod_{i \in I_d}\gamma_i \cdot \prod_{i \in I_d} \psi_i(b'_i|b_i) \cdot \notag\\
    &\quad \Bigl( \Pr\left(\text{all }i \in I_d \bigcap I_{c} \text{ fail to reject, all } i \in I_d \bigcap I_{nc} \text{ reject} \right) \cdot \Pr\left( \text{all } i \in I_d \bigcap I_c \text{ choose to explore} \right) \Bigr. \notag\\
    &\quad + \sum_{\substack{I_r \neq I_i^{\dagger}\\ I_r \subseteq I_d}} \Pr\left(\text{all }i \in I_d \setminus I_r \text{ fail to reject, all } i \in I_r \text{ reject} \right)\cdot \Pr\left( \text{all } i \in I_d \setminus I_r \text{ choose to explore} \right) \Bigl. \Bigr)\notag\\
    &\stackrel{\eqref{eq: test result upper bound}}{\leq} \prod_{i \in I_d}\gamma_i \cdot \prod_{i \in I_d} \psi_i(b'_i|b_i)\cdot \left( 1 \cdot \Pr\left( \text{all } i \in I_d \bigcap I_c \text{ choose to explore} \right) + |I| \cdot \xi^{\bar{u}} \cdot 1\right) \label{eq: P_ep upperbound cont2}\\
    &=  \prod_{i \in I_d}\gamma_i \cdot \prod_{i \in I_d} \psi_i(b'_i|b_i) \cdot \left(\prod_{i \in I_d \cap I_c}\xi^{f_i(\U{i}{\pi_i,b_i})} + |I| \cdot \xi^{\bar{u}} \right) \notag\\
    &= \prod_{i \in I_d}\gamma_i \cdot \prod_{i \in I_d} \psi_i(b'_i|b_i) \cdot \left(\xi^{\sum_{i \in I_d \cap I_c}f_i(\U{i}{\pi_i,b_i})} + |I| \cdot \xi^{\bar{u}} \right). \label{eq: P_ep upperbound cont}
\end{align}
\end{subequations}
Inequality \eqref{eq: P_ep upperbound cont2} follows from inequality \eqref{eq: test result upper bound}, and the fact that all probabilities are smaller than or equal to $1$. Then, 
\begin{subequations}
\begin{align}
    &\quad \lim_{\xi \to 0} P^{\xi}_{zz'} \cdot \xi^{-r_{zz'}} \notag\\
    &\stackrel{\eqref{eq: P_ep upperbound cont}}{\leq} \lim_{\xi \to 0} \prod_{i \in I_d}\gamma_i \cdot \prod_{i \in I_d} \psi_i(b'_i|b_i) \cdot \left(\xi^{\sum_{i \in I_d \cap I_c}f_i(\U{i}{\pi_i,b_i})} + |I| \cdot \xi^{\bar{u}} \right) \cdot \xi^{-\sum_{i \in I_d \cap I_c} f_i(\U{i}{\pi_i,b_i})}\notag\\
    &= \lim_{\xi \to 0} \prod_{i \in I_d}\gamma_i \cdot \prod_{i \in I_d} \psi_i(b'_i|b_i) \cdot \left(1 + |I| \cdot \xi^{\bar{u}-\sum_{i \in I_d \cap I_c} f_i(\U{i}{\pi_i,b_i})} \right) \notag\\
    &= \pc^{|I_d|} \cdot \prod_{i \in I_d} \psi_i(b'_i|b_i) < \infty,\notag
\end{align}
\end{subequations}
where the last equality follows from \[\bar{u}-\sum_{i \in I_d \cap I_c} f_i(\U{i}{\pi_i,b_i}) > \max_{\pi \in \Delta} \sum_{i \in I} f_i(\U{i}\pi) -\sum_{i \in I_d \cap I_c} f_i(\U{i}{\pi_i,b_i}) \geq 0.\]
Therefore, we can conclude that given $r_{zz'} = \sum_{i \in I_d \cap I_c} f_i(\U{i}{\pi_i,b_i})$, we have
\[0 < \lim_{\xi \to 0} P^{\xi}_{zz'} \cdot \xi^{-r_{zz'}} < \infty. \]

\end{enumerate}
\hfill $\square$

\end{document}